\documentclass[runningheads,autoref]{lipics-v2021}
\usepackage[T1]{fontenc}
\usepackage{graphicx}
\usepackage{multirow}
\usepackage{array}
\usepackage{subcaption}
\usepackage{xspace}
\usepackage{listings}
\usepackage{hyperref}
\usepackage{amssymb,amsmath}
\usepackage{stmaryrd}
\usepackage[nolist,nohyperlinks]{acronym}
\usepackage{tabularx,booktabs}
\usepackage{algorithm}
\usepackage[noend]{algpseudocode}
\usepackage{bm,enumitem}
\usepackage{cleveref}
\usepackage{tikz}
\usetikzlibrary{calc}
\usetikzlibrary{decorations.pathmorphing}

\usepackage{todonotes}

\usepackage{xcolor}

\definecolor{Green}{rgb}{0.09, 0.45, 0.27}

\newcommand{\blue}[1] {%
	\protect\leavevmode%
  \begingroup%
	\color{blue}%
    #1%
	\endgroup%
}

\newcommand{\green}[1] {%
	\protect\leavevmode%
  \begingroup%
	\color{Green}%
    #1%
	\endgroup%
}

\newcommand{\magenta}[1] {%
	\protect\leavevmode%
  \begingroup%
	\color{magenta}%
    #1%
	\endgroup%
}
\newcommand{\orange}[1] {%
	\protect\leavevmode%
  \begingroup%
	\color{orange}%
    #1%
	\endgroup%
}

\newtheorem{innercustomlemma}{Lemma}
\newenvironment{dupllemma}[1]
  {\renewcommand\theinnercustomlemma{#1}\innercustomlemma}
  {\endinnercustomlemma}

\newtheorem{innercustomthm}{Theorem}
\newenvironment{dupltheorem}[1]
  {\renewcommand\theinnercustomthm{#1}\innercustomthm}
  {\endinnercustomthm}

\newtheorem{invariant}{Invariant}
\crefname{invariant}{Invariant}{Invariants}

\newcommand*{\verit}{\texttt{veriT}}

\newcommand*{\trail}{\tau}
\newcommand*{\q}{\omega}
\newcommand*{\p}{\pi}

\newcommand*{\wl}{\mathrm{WL}} 
\newcommand*{\reason}{\rho} 
\newcommand*{\level}{\delta}
\newcommand*{\chunks}{\gamma}
\newcommand*{\crossChunks}{\eta}
\newcommand*{\weight}{\zeta}
\newcommand*{\cadical}[0]{\textsc{CaDiCaL}\xspace} %

\newcommand{\Continue}{\textbf{continue} }

\newcommand*{\napsat}[0]{\textsc{NapSAT}\xspace} 
\newcommand*{\minisat}[0]{\textsc{MiniSAT}\xspace} %
\newcommand*{\vampire}[0]{\textsc{Vampire}\xspace} %
\newcommand*{\cvc}[0]{\textsc{cvc5}\xspace} %

\tikzstyle{vertex}=[draw,minimum size=20pt,inner sep=1pt]
\tikzstyle{implied}=[circle]
\tikzstyle{decision}=[rectangle]
\tikzstyle{myarr}=[shorten >=1pt,->,>=stealth]
\tikzstyle{marked}=[fill=blue!15]
\tikzstyle{conflict}=[dashed, thick]

\tikzset{
  prefix after node/.style={prefix after command=\pgfextra{#1}},
  /semifill/ang/.initial=45,
  /semifill/upper/.initial=none,
  /semifill/lower/.initial=none,
  semifill/.style={
    circle, draw,
    prefix after node={
      \pgfqkeys{/semifill}{#1}
      \path let \p1 = (\tikzlastnode.north), \p2 = (\tikzlastnode.center),
                \n1 = {\y1-\y2} in [radius=\n1]
            (\tikzlastnode.\pgfkeysvalueof{/semifill/ang})
            edge[
              draw=none,
              fill=\pgfkeysvalueof{/semifill/upper},
              to path={
                arc[start angle=\pgfkeysvalueof{/semifill/ang}, delta angle=180]
                -- cycle}] ()
            (\tikzlastnode.\pgfkeysvalueof{/semifill/ang})
            edge[
              draw=none,
              fill=\pgfkeysvalueof{/semifill/lower},
              to path={
                arc[start angle=\pgfkeysvalueof{/semifill/ang}, delta angle=-180]
                -- cycle}] ();}}}

\tikzset{
  node split radius/.initial=1,
  node split color 1/.initial=magenta!25,
  node split color 2/.initial=orange!25,
  node split color 3/.initial=blue!25,
  node split half/.style={node split={#1,#1+180}},
  node split/.style args={#1,#2}{
    path picture={
      \tikzset{
        x=($(path picture bounding box.east)-(path picture bounding box.center)$),
        y=($(path picture bounding box.north)-(path picture bounding box.center)$),
        radius=\pgfkeysvalueof{/tikz/node split radius}}
      \foreach \ang[count=\iAng, remember=\ang as \prevAng (initially #1)] in {#2,360+#1}
        \fill[line join=round, fill=\pgfkeysvalueof{/tikz/node split color \iAng}]
          (path picture bounding box.center)
          --++(\prevAng:\pgfkeysvalueof{/tikz/node split radius})
          arc[start angle=\prevAng, end angle=\ang] --cycle;
} } }

\tikzstyle{ck_a}=[fill=green!25]
\tikzstyle{ck_b}=[fill=orange!25]
\tikzstyle{ck_c}=[fill=blue!25]
\tikzstyle{ck_d}=[fill=magenta!25]
\tikzstyle{ck_ab}=[semifill={upper=orange!25, lower=green!25}]
\tikzstyle{ck_cd}=[semifill={upper=magenta!25, lower=blue!25}]
\tikzstyle{ck_bcd}=[node split={45, 165, 285}]

\DeclareMathOperator*{\argmin}{arg\,min}

\newcolumntype{P}[1]{>{\centering\arraybackslash}p{#1}}

\renewcommand{\paragraph}[1]{\par\smallskip\noindent\textbf{#1}}

\title{Generalizing CDCL with Graph Backtracking}

\author{%
	Robin Coutelier}
	{TU Wien, Vienna, Austria}
	{robin.coutelier@tuwien.ac.at}
	{0009-0002-4735-5215}
	{}

\author{%
    Thomas Hader}
	{TU Wien, Vienna, Austria}
	{thomas.hader@tuwien.ac.at}
	{0009-0002-8920-5469}
	{}

\author{%
	Laura Kovács}
	{TU Wien, Vienna, Austria}
	{laura.kovacs@tuwien.ac.at}
	{0000-0002-8299-2714}
	{}

\authorrunning{Coutelier et al.}

\Copyright{Robin Coutelier, Thomas Hader, and Laura Kovács}

\ccsdesc{Theory of computation~Constraint and logic programming}
\ccsdesc{Theory of computation~Automated reasoning}

\EventEditors{Alexey Ignatiev and Stefan Szeider}
\EventNoEds{2}
\EventLongTitle{29th International Conference on Theory and Applications of Satisfiability Testing (SAT 2026)}
\EventShortTitle{SAT 2026}
\EventAcronym{SAT}
\EventYear{2026}
\EventDate{July 20--23, 2026}
\EventLocation{Lisbon, Portugal}
\EventLogo{}
\SeriesVolume{377}
\ArticleNo{1}

\acknowledgements{We would like to thank Mathias Fleury, Michael Rawson, Clemens Eisenhofer, Márton Hajdu, and Pascal Fontaine for fruitful discussions. We also thank the anonymous reviewers for their feedback.}
\funding{The authors acknowledge support from the ERC Consolidator Grant ARTIST 101002685; the TU Wien Doctoral Colleges TrustACPS and SecInt; the FWF SpyCoDe SFB projects F8504; and the WWTF Grant ForSmart 10.47379/ICT22007.}

\nolinenumbers
\begin{document}
	\setcounter{footnote}{0}
	\maketitle
	\begin{abstract}
		We present \emph{graph backtracking}, a novel, fine-grained backtracking scheme for CDCL-based SAT solving, parametrized by a user-defined weight function.
		For conflict repair, we challenge the decision level abstraction and use the implication graph as a precise guiding structure to minimize the weight of literals that are unassigned.
		Graph backtracking is sound, complete, and
                terminating. We show that it is a generalization of
                chronological and non-chronological backtracking by
                simulating them with specific weight functions.
		Our approach is implemented in the experimental solver
                \napsat. Empirical results show that graph
                backtracking requires fewer literal propagations than
                standard approaches, leading to improved solver runtime.
		\keywords{SAT Solving, Backtracking, Conflict Analysis, CDCL}
	\end{abstract}

	\acresetall



\section{Introduction}
The CDCL algorithm~\cite{DBLP:journals/tc/Marques-SilvaS99} is the
dominant approach in propositional satisfiability (SAT)
solving~\cite{DBLP:series/faia/336}.
CDCL solvers typically employ a rather aggressive backtracking scheme
for conflict repair, referred to as non-chronological backtracking
(NCB).
More recently, chronological backtracking (CB)
has been shown to complement NCB in SAT solving~\cite{DBLP:conf/sat/NadelR18,DBLP:conf/sat/MohleB19},
with dedicated CB-based algorithms improving the state of the art in
model counting~\cite{DBLP:conf/sat/MohleB19,DBLP:conf/cav/SoosM25} and
AllSAT~\cite{DBLP:journals/ai/SpallittaSB25},
with further applications to MaxSAT~\cite{DBLP:conf/sat/Nadel22}.
However, existing backtracking schemes maintain a strict top-down order: undoing the most recent decisions
and their consequences first.
This approach provides strong invariants and enables efficient
implementations,
but restricts the locality of the search and may lead to unnecessary
backtracking.
Orthogonal to backtracking, user propagators~\cite{DBLP:conf/vmcai/BjornerEK23,DBLP:journals/jair/FazekasNPKSB24} give users more control over the behavior of the SAT solver by allowing them to extract partial assignments, raise application-specific conflicts, assign new literals, and provide new clauses during proof search.

Inspired by developments in user propagators and backtracking strategies, we introduce \emph{graph backtracking} (GB), a more
general backtracking scheme for CDCL-based SAT solving. GB enables a more flexible and fine-grained mechanism for repairing conflicts.
It determines precisely which literals can be unassigned and performs minimal backtracking.
Furthermore, instead of always undoing the latest assigned variables, a user can supply a weight
function, allowing more control over backtracking.
In particular, the user may prefer some
literals to remain assigned longer. Associating them with a higher weight
discourages the solver from unassigning them, thus achieving the
desired behavior.


\paragraph{\bf Motivating example.} We motivate graph backtracking via a
SAT-based search. Let $h_1$ and $h_2$ denote comparatively heavy literals, and the following clause set $F$:
\begin{equation*}
	\begin{split}
		C_1 &= w \lor \lnot a\\
		C_2 &= x \lor \neg b
	\end{split}
	\quad\quad
	\begin{split}
    C_3 &= y \lor \neg w \lor \neg x\\
		C_4 &= h_1 \lor \neg c
	\end{split}
  \quad\quad
  \begin{split}
    C_5 &= z \lor \neg d \lor \neg h_1\\
    C_6 &= h_2 \lor \neg z
	\end{split}
  \quad\quad
	\begin{split}
		C_7 &= a \lor \neg c \lor \neg d\\
		C_8 &= \neg w \lor \neg y \lor \neg z \lor \neg h_2
	\end{split}
\end{equation*}
\begin{figure}[ht!]
  \def\xdist{1.3cm}
  \def\ydist{1.1cm}
  \begin{subfigure}{0.49\textwidth}
    \centering
    \begin{tikzpicture}[scale=0.9, transform shape]
      \useasboundingbox (-1.5,-0.5) rectangle (3.5*\xdist,3.5*\ydist);

      \node at (-1, 0*\ydist) {$\level = 1$};
      \node at (-1, 1*\ydist) {$\level = 2$};
      \node at (-1, 2*\ydist) {$\level = 3$};
      \node at (-1, 3*\ydist) {$\level = 4$};

      \node[vertex, decision, ck_a ]           (a)  at (0*\xdist, 0*\ydist) {$a$};
      \node[vertex, implied , ck_a, conflict]  (w)  at (1*\xdist, 0*\ydist) {$w$};
      \node[] (C1) at (1.35*\xdist, -0.35*\ydist) {\tiny $C_1$};

      \node[vertex, decision, ck_b ]           (b)  at (0*\xdist, 1*\ydist) {$b$};
      \node[vertex, implied , ck_b ]           (x)  at (1*\xdist, 1*\ydist) {$x$};
      \node[] (C2) at (1.35*\xdist, 0.65*\ydist) {\tiny $C_2$};
      \node[vertex, implied , ck_ab, conflict] (y)  at (2*\xdist, 1*\ydist) {$y$};
      \node[] (C3) at (2.35*\xdist, 0.65*\ydist) {\tiny $C_3$};

      \node[vertex, decision, ck_c ]           (c)  at (0*\xdist, 2*\ydist) {$c$};
      \node[vertex, implied , ck_c ]           (h1) at (1*\xdist, 2*\ydist) {$\bm{h_1}$};
      \node[] (C4) at (1.35*\xdist, 1.65*\ydist) {\tiny $C_4$};

      \node[vertex, decision, ck_d ]           (d)  at (0*\xdist, 3*\ydist) {$d$};
      \node[vertex, implied , ck_cd, conflict] (z)  at (1*\xdist, 3*\ydist) {$z$};
      \node[] (C6) at (2.35*\xdist, 2.65*\ydist) {\tiny $C_6$};
      \node[vertex, implied , ck_cd, conflict] (h2) at (2*\xdist, 3*\ydist) {$\bm{h_2}$};
      \node[] (C5) at (1.35*\xdist, 2.65*\ydist) {\tiny $C_5$};

      \draw (a) edge[myarr] (w) ;

      \draw (b) edge[myarr] (x);
      \draw (x) edge[myarr] (y);
      \draw (w)  edge[myarr, bend right=30] (y);

      \draw (c) edge[myarr] (h1);

      \draw (d) edge[myarr] (z) ;
      \draw (h1) edge[myarr] (z) ;
      \draw (z)  edge[myarr] (h2);

    \end{tikzpicture}
    \caption{{\small Conflict detected in clause $C_8$.}}
    \label{fig:motivation-conflict}
  \end{subfigure}
  \begin{subfigure}{0.5\textwidth}
    \centering
    \begin{tikzpicture}[scale=0.9, transform shape]
      \useasboundingbox (-1.5,-0.5) rectangle (3.5*\xdist,3.5*\ydist);

      \node at (-1, 1*\ydist) {$\level = 1$};
      \node at (-1, 2*\ydist) {$\level = 2$};
      \node at (-1, 3*\ydist) {$\level = 3$};

      \node[vertex, decision, ck_b ]           (b)  at (0*\xdist, 1*\ydist) {$b$};
      \node[vertex, implied , ck_b ]           (x)  at (1*\xdist, 1*\ydist) {$x$};

      \node[vertex, decision, ck_c ]           (c)  at (0*\xdist, 2*\ydist) {$c$};
      \node[vertex, implied , ck_c ]           (h1) at (1*\xdist, 2*\ydist) {$\bm{h_1}$};

      \node[vertex, decision, ck_d ]           (d)  at (0*\xdist, 3*\ydist) {$d$};
      \node[vertex, implied , ck_cd]           (z)  at (1*\xdist, 3*\ydist) {$z$};
      \node[vertex, implied , ck_cd]           (h2) at (2*\xdist, 3*\ydist) {$\bm{h_2}$};
      \node[vertex, implied , ck_bcd]          (nx) at (3*\xdist, 3*\ydist) {$\neg w$};
      \node at (3.35*\xdist, 2.65*\ydist) {\tiny $C_9'$};

      \draw (b) edge[myarr] (x);

      \draw (c) edge[myarr] (h1);

      \draw (d) edge[myarr] (z) ;

      \draw (h1) edge[myarr] (z) ;
      \draw (z)  edge[myarr] (h2);
      \draw (z)  edge[myarr, bend left=30] (nx);
      \draw (h2) edge[myarr] (nx);
      \draw (x) edge[myarr, bend right=30] (nx);

    \end{tikzpicture}
    \caption{{\small Graph backtracking (GB) [this work].}}
    \label{fig:motivation-gb}
  \end{subfigure}

  \begin{subfigure}{0.49\textwidth}
    \centering
    \begin{tikzpicture}[scale=0.9, transform shape]
      \useasboundingbox (-1.5,-0.5) rectangle (3.5*\xdist,2.5*\ydist);

      \node at (-1, 0*\ydist) {$\level = 1$};
      \node at (-1, 1*\ydist) {$\level = 2$};

      \node[vertex, decision, ck_a ]           (a)  at (0*\xdist, 0*\ydist) {$a$};
      \node[vertex, implied , ck_a ]           (w)  at (1*\xdist, 0*\ydist) {$w$};

      \node[vertex, decision, ck_b ]           (b)  at (0*\xdist, 1*\ydist) {$b$};
      \node[vertex, implied , ck_b ]           (x)  at (1*\xdist, 1*\ydist) {$x$};

      \node[vertex, implied , ck_ab]           (y)  at (2*\xdist, 1*\ydist) {$y$};
      \node[vertex, implied , ck_ab]           (nz) at (3*\xdist, 1*\ydist) {$\neg z$};
      \node at (3.35*\xdist, 0.65*\ydist) {\tiny $C_9$};

      \draw (a) edge[myarr] (w) ;

      \draw (b) edge[myarr] (x);
      \draw (x) edge[myarr] (y);
      \draw (w)  edge[myarr, bend right=30] (y);
      \draw (y) edge[myarr] (nz) ;
      \draw (w)  edge[myarr, bend right=25] (nz) ;

    \end{tikzpicture}
    \caption{{\small Non-chronological backtracking (NCB)~\cite{DBLP:conf/dac/MoskewiczMZZM01}.}}
    \label{fig:motivation-ncb}
  \end{subfigure}
  \begin{subfigure}{0.5\textwidth}
    \centering
    \begin{tikzpicture}[scale=0.9, transform shape]

      \useasboundingbox (-1.5,-0.5) rectangle (3.5*\xdist,2.5*\ydist);

      \node at (-1, 0*\ydist) {$\level = 1$};
      \node at (-1, 1*\ydist) {$\level = 2$};
      \node at (-1, 2*\ydist) {$\level = 3$};

      \node[vertex, decision, ck_a ]           (a)  at (0*\xdist, 0*\ydist) {$a$};
      \node[vertex, implied , ck_a ]           (w)  at (1*\xdist, 0*\ydist) {$w$};

      \node[vertex, decision, ck_b ]           (b)  at (0*\xdist, 1*\ydist) {$b$};
      \node[vertex, implied , ck_b ]           (x)  at (1*\xdist, 1*\ydist) {$x$};
      \node[vertex, implied , ck_ab]           (y)  at (2*\xdist, 1*\ydist) {$y$};
      \node[vertex, implied , ck_ab]           (nz) at (3*\xdist, 1*\ydist) {$\neg z$};
      \node at (3.35*\xdist, 0.65*\ydist) {\tiny $C_9$};

      \node[vertex, decision, ck_c ]           (c)  at (0*\xdist, 2*\ydist) {$c$};
      \node[vertex, implied , ck_c ]           (h1) at (1*\xdist, 2*\ydist) {$\bm{h_1}$};

      \draw (a) edge[myarr] (w) ;

      \draw (b) edge[myarr] (x);
      \draw (x) edge[myarr] (y);
      \draw (w)  edge[myarr, bend right=30] (y);
      \draw (y) edge[myarr] (nz) ;
      \draw (w)  edge[myarr, bend right=25] (nz) ;

      \draw (c) edge[myarr] (h1);

    \end{tikzpicture}
    \caption{{\small Chronological backtracking (CB)~\cite{DBLP:conf/sat/NadelR18}.}}
    \label{fig:motivation-cb}
  \end{subfigure}
	\caption{Implication graphs of $F$ for different backtracking strategies.
          Decisions are represented as squares and applied in lexicographic order. When relevant, implied literals are annotated with their justifications. Colors indicate dependencies on decision literals. Decision levels are annotated with $\level$. The literals of the conflicting clause $C_8$ are highlighted with thicker dashed borders.}
	\label{fig:motivation}
\end{figure}

\Cref{fig:motivation-conflict} displays a possible implication graph resulting from $F$.
An implication graph denotes
logical implication relations between literals (for preliminaries, see \Cref{sec:preliminaries}).
A conflict is detected in clause $C_8$ during the propagation of $z$.
To repair it, the traditional conflict analysis~\cite{DBLP:journals/tc/Marques-SilvaS99}
algorithm learns the clause $C_9 = \neg w \lor \neg y \lor \neg z$ and non-chronological backtracking~\cite{DBLP:conf/dac/MoskewiczMZZM01} backjumps to the decision level of $b$ (\Cref{fig:motivation-ncb}). NCB notably unassigns the heavy literals $h_1$ and $h_2$ against the user's preference.
Chronological
backtracking~\cite{DBLP:conf/sat/NadelR18}
learns the same clause $C_9$ but realizes that the decision
$c$ does not need to be undone.
Instead, it would only undo $d, z$ and $h_2$ (\Cref{fig:motivation-cb}).
This is an improvement, as $h_1$ remains assigned.

Our graph backtracking algorithm allows unassigning \emph{any}%
\footnote{Some safeguards are needed to ensure termination (\Cref{sec:chunk-selection}).}
of the decisions involved in the conflict and their
subsequent implications, i.e., any of the colors in \Cref{fig:motivation-conflict}.
Considering the user-provided weights, it unassigns the green literals $a, w$ and $y$,
preserving both heavy literals $h_1$ and $h_2$, as illustrated in
\Cref{fig:motivation-gb}.
Our modified analysis (see \Cref{sec:chunk:repair}) learns%
\footnote{Shrinking could also be used to shorten the clause to $C_9' = \lnot w \lor \lnot x \lor \neg z$. We omit this optimization.}
the clause $C_9' = \lnot w \lor \lnot x \lor \neg z \lor \neg h_2$ and
GB now implies $\neg w$ because of $C_9'$ (\Cref{fig:motivation-gb}).


\paragraph{Contributions.}
Our main contributions are as follows:
\begin{itemize}
	\item We extend the interface of SAT solvers with a weight function for literals (\Cref{sec:chunk-selection}).
	\item We present \emph{graph backtracking}, a precise and user-guided backtracking algorithm (\Cref{alg:cdcl}).
	\item We show that graph
          backtracking is sound and terminating (\Cref{thm:gb:soundness}).
  \item We show that graph backtracking is a generalization of (N)CB (\Cref{lem:cb-simulation,lem:ncb-simulation}).
	\item We suggest several practical optimizations of graph backtracking (\Cref{sec:gb:improve}).
	\item We implement
          graph backtracking and evaluate its performance empirically (\Cref{sec:experiments}).
\end{itemize}


\section{Preliminaries}
\label{sec:preliminaries}

We assume familiarity with propositional logic and CDCL~\cite{DBLP:series/faia/336} and use the standard logical connectives $\lnot$, $\land, \lor$, and $\Rightarrow$.
We denote a finite set of propositional variables by $\mathcal{V}$. A \emph{literal} is either a variable or its negation; that is, a literal is  $v$ or $\lnot v$, for some $v\in \mathcal{V}$.
The set of literals is denoted by $\mathcal{L} = \{v, \lnot v:v \in \mathcal{V}\}$. Note that $\lnot\lnot \ell = \ell$.
A \emph{clause} is a (disjunctive) set of literals $C = \{c_1,c_2,\dots,c_n\}$ where $c_i$ is a literal.
We denote the empty clause as $\square$ and the undefined clause $\blacksquare$ is used for the absence of a clause in partial functions.
Without loss of generality, we only consider a propositional formula $F$ in \emph{conjunctive normal form (CNF)} as a (conjunctive) set of clauses $\{C_1, C_2,\dots,C_m\}$ over $\mathcal{V}$.

An \emph{ordered set} is a set with a total positional order on its elements.
Ordered sets support the concatenation operator $\cdot$ and are stable under set operations, i.e., set intersection and difference on ordered sets maintain the order of the left operand.

A (partial) assignment $\p$ is an ordered set of literals.
A literal and its negation cannot appear simultaneously in $\p$.
In an assignment $\p$, a literal is \emph{satisfied} if $\ell \in \p$; \emph{falsified} if $\neg \ell \in \p$; and \emph{unassigned} otherwise.
We say that a clause is \emph{satisfied} if $C \cap \p \neq \emptyset$; \emph{falsified} if $C \setminus \{\neg \ell:\ell \in \p\} = \square$; and \emph{unit} in literal $\ell$ if $C \setminus \{\neg \ell':\ell' \in \p\} = \{\ell\}$.

A literal $\ell$ is \emph{implied} by a clause $C$ under a partial assignment $\p$ if $C$ is unit in $\ell$ under $\p$ and $\ell$ is satisfied by $\p$.
The \emph{justification} of $\ell$, also known as the \emph{reason} of $\ell$, is the unit clause $\reason(\ell) = C$ that explains the implication of $\ell$ on $\p$.
A \emph{missed implication} is a clause $C$ that implies a literal $\ell$ under $\p$ but is not the justification of $\ell$.
A \emph{missed lower implication} is a missed implication $C$ of $\ell$ such that $\ell$ is implied by $C$ at a lower decision level than $\reason(\ell)$.
A \emph{decision} literal $\ell$ does not have a justification, hence $\reason(\ell) = \blacksquare$.

The \emph{decision level} $\level(\ell)$ of a literal $\ell$ is defined as follows: if $\ell$ is a decision, then its decision level is the number of decisions in $\p$ preceding $\ell$, plus one; otherwise, its decision level is the maximum level of all literals in $\reason(\ell) \setminus \{\ell\}$.
When relevant, we write $\ell @ n$ to denote that a literal $\ell$ is on decision level $n$.
The decision level of a set of literals $S$ is $\level(S) = \max_{\ell \in S} \level(\ell)$.
Furthermore, we define a \emph{literal weight function}
$\weight : \mathcal{L} \rightarrow \mathbb{R}$ that maps a literal $\ell\in\mathcal{L}$ to a real number. We also extend $\weight$ to sets of literals by defining $\weight(S) = \sum_{\ell \in S} \weight(\ell)$.

Given a partial assignment $\p$, we partition $\p = \trail \cdot \q$ into the set of propagated literals $\trail$ and the queue of literals to propagate $\q$.
We say that a literal $\ell$ is propagated if $\ell \in \trail$ or $\neg \ell \in \trail$. We note that some related work calls \emph{propagated} what we call \emph{implied}; this distinction becomes relevant in \Cref{sec:BCP}.

To visualize the implications of literals on a partial assignment $\p$, we use \emph{implication graphs}, as shown in \Cref{fig:motivation}. An implication graph is a directed acyclic graph
with a node for each literal in $\p$ and edges indicating the justification of each literal.
A literal $\ell$ in the graph is the implied consequence of all literals with an edge to $\ell$ using clause $\reason(\ell)$.
Decision literals do not have any incoming edges.
A literal \emph{$\ell$ depends on literal $\ell'$} if there is a path from $\ell'$ to $\ell$ in the implication graph.


\section{Related Work}
\label{sec:related}
\paragraph{Non-Chronological Backtracking (NCB)} has been the standard backtracking approach in most state-of-the-art SAT solvers~\cite{DBLP:series/faia/336,DBLP:conf/sat/EenS03} since watched literals became popular~\cite{DBLP:conf/dac/MoskewiczMZZM01}.
When a NCB algorithm finds a conflict, it performs conflict analysis to learn a new clause preventing the same conflict from happening again. The standard algorithm searches a unique implication point (UIP), i.e., a clause with one literal at the highest decision level. The UIP results from binary resolutions of the conflicting clause and the justifications leading to it. Usually, we stop at the first UIP encountered (1UIP).

NCB is characterized by backjumping. To repair the conflict, NCB will backtrack to the second-highest level in the learned clause. In the example \Cref{fig:motivation-conflict}, the clause $C_9 = \neg w@1 \lor \neg y@2 \lor \neg z@4$ has the highest decision level $4$ and second-highest decision level $2$. The algorithm then backjumps to decision level $2$ (\Cref{fig:motivation-ncb}).

\paragraph{Chronological Backtracking (CB)} was reintroduced more recently~\cite{DBLP:conf/sat/NadelR18,DBLP:conf/sat/MohleB19,DBLP:conf/sat/Nadel22,DBLP:conf/sat/CoutelierFK24}. It won the SAT2018 competition with \textsc{Maple\_LCM} and is now part of \cadical~\cite{DBLP:conf/cav/BiereFFFFP24}. It was shown that CB also improves performance in applications such as model counting~\cite{DBLP:conf/cav/SoosM25} and enumeration~\cite{DBLP:journals/ai/SpallittaSB25} due to a more systematic exploration of the search space, but also MaxSAT~\cite{DBLP:conf/sat/Nadel22} due to its higher flexibility in incremental solving.

Less restrictive than NCB, CB only requires backtracking to the highest decision level in the learned clause, minus one. In the example \Cref{fig:motivation-conflict}, the clause $C_9$ has the highest decision level $4$. The algorithm then backtracks to decision level $3$ (\Cref{fig:motivation-cb}).

More precisely, CB may backtrack anywhere between the second highest and the highest decision level minus one, and heuristics can be used to select the appropriate level \cite{DBLP:conf/sat/NadelR18,DBLP:conf/sat/MohleB19}. In the following, we will only consider the variant of CB that backtracks to the highest decision level minus one, as it is simpler to differentiate from NCB.

\paragraph{Beyond SAT Solving.}
Some attempts have been made to improve the backtracking processes in CSP, notably~\cite{DBLP:journals/jair/Ginsberg93}. However, these approaches did not consider the inner workings of propagation algorithms and did not learn clauses. Additionally, \cadical introduced a \emph{facade} for better integration in SMT solver with CB~\cite{DBLP:conf/cav/BiereFFFFP24}. \cadical shows a partial assignment to the SMT solver that does not reflect the state of the SAT solver and pretends to use a stack. This facade is orthogonal to our approach, and could be used in conjunction with graph backtracking. Some work has been done to add implications graphs with redundancy in the SMT congruence closure algorithm~\cite{DBLP:conf/vmcai/AndreottiB26}.
GB might also have applications in local search~\cite{DBLP:conf/aaai/CaiS12} due to its minimalistic approach to backtracking.


\section{Graph Backtracking}
\label{sec:details}
We now present our \emph{graph backtracking (GB)} algorithm. We highlight the
 key features of CDCL-based SAT solving using our GB approach,
 in particular when compared to the existing strategies of chronological backtracking (CB)~\cite{DBLP:conf/sat/NadelR18} and non-chronological backtracking (NCB)~\cite{DBLP:journals/tc/Marques-SilvaS99}.
 To ease readability, the differences between GB and (N)CB are highlighted in blue in the respective algorithms.

 \paragraph{\bf GB in a Nutshell.} The main idea of GB is
 to use implication graphs to determine the set of literals that can
 be unassigned upon conflict detection. We cluster literals
 into so-called \emph{chunks}: groups of literals that can be undone together.
 We can repair the assignment by undoing any chunk involved in the conflict (\Cref{sec:chunk-selection}).
 With this strategy, we obtain a CDCL-based SAT solving approach using GB,
 summarized in \Cref{alg:cdcl},
 where treatment of propositional variables in case of
 conflict analysis is more surgical and controlled.

In (N)CB, literals are clustered according to their decision levels. This clustering forms a coarse over-approximation of the actual dependencies between literals. The decision levels create the simplifying assumption that literals assigned at a level $d$ depend on all literals below $d$. Hence, when backtracking to level $d$, all literals above $d$ need to be unassigned.
Our GB algorithm refines this clustering by using \emph{chunks}. A chunk $ck_\ell$ is a set of literals depending on the decision literal $\ell$.
That is, the set of literals reachable from a decision $\ell$ in the implication graph. Naturally, an implied literal can depend on multiple decisions and belong to several chunks. We write $\chunks(\ell)$ as the set of chunks containing a literal $\ell$. For simplicity, we assume that $\chunks(\ell) = \chunks(\neg \ell)$. Our notation extends to sets of literals (such as clauses) $S$ as $\chunks(S) = \bigcup_{\ell \in S} \chunks(\ell)$. A decision literal $d$ only belongs to its associated chunk $\chunks(d) = \{ck_d\}$. A literal $\ell$ justified by a clause $C = \reason(\ell)$ belongs to the chunks $\chunks(\ell) = \chunks(C\setminus \{\ell\})$.

\begin{example}
  \label{ex:gb:chunk-selection}
  Consider \Cref{fig:motivation-conflict}. The implication graph contains four chunks, each highlighted in a distinct color, $\green{ck_a} = \{a, w, y\}, \orange{ck_b} = \{b, x, y\}, \blue{ck_c} = \{c, z, h_1, h_2\}$, and $\magenta{ck_d} = \{d, z, h_2\}$. The conflicting clause $C_8 = \neg w \lor \neg y \lor \neg z \lor \neg h_2$ belongs to the four chunks $\chunks(C_8) = \{ck_a, ck_b, ck_c, ck_d\}$. GB can therefore choose to backtrack $ck_a$ since it is the lightest, as it does not contain any heavy literals $h_i$. (\Cref{sec:chunk-selection} explains why $ck_b$ is not selected.)
\end{example}

When a conflict clause $C$ is discovered, in GB we may undo any of the chunks in $\chunks(C)$ to resolve the conflict. \Cref{sec:chunk-selection} describes how this choice is made. The undone chunks influence the conflict analysis procedure of GB, as explained in \Cref{sec:chunk:repair}. To maintain clean invariants, watched literals require special care in GB, as detailed in \Cref{sec:BCP}.


\subsection{Repair Chunk Selection}
\label{sec:chunk-selection}

Upon discovering a conflict $C$, a CDCL solver needs to undo parts of the assignment, such that a newly learned clause $C'$, derived from the conflict, becomes unit and implies a flipped literal. To this end, GB may undo any set of chunks%
\footnote{Here we talk about chunk sets and not individual chunks for generality, relevant in \Cref{sec:gb:improve,sec:reduction}.}
$\Gamma_i$ that contains exactly one%
\footnote{Undoing more chunks may make it impossible to learn a unit clause.}
chunk from $\chunks(C)$, i.e., $|\Gamma_i \cap \chunks(C)| = 1$.
GB then selects the lightest chunk set $\Gamma^*$  among all valid candidates $\Gamma_i$. Conflict analysis then discovers $C'$ containing exactly one literal $\ell$ belonging to $\Gamma^*$. After backtracking $\Gamma^*$, $\ell$ is implied by $C'$.%

Among all backtrack candidates, the selection of $\Gamma^*$ is guided by the 
user-provided cost function $\weight$.
A careless choice of backtracked chunks might not lead to termination, as illustrated next.

\begin{figure}[h]
  \centering
  \def\xdist{1.2cm}
  \def\ydist{1.2cm}
  \vspace{1em} 
  \begin{subfigure}{0.30\textwidth}
    \centering
    \begin{tikzpicture}[scale=0.9, transform shape]
      \useasboundingbox (-1.5,-1) rectangle (3,2.4);
      \node at (-1.1, 0) {$\level = 1$};
      \node at (-1.1, 1.8) {$\level = 2$};

      \node[vertex, decision, ck_a          ](a) at (0*\xdist, 0*\ydist)   {$a$};
      \node[vertex, implied,  ck_a          ](w) at (1*\xdist, 0*\ydist)   {$w$};
      \node[vertex, implied,  ck_a, conflict](z) at (2*\xdist, 0*\ydist)   {$z$};
      \node at (2.35*\xdist, -0.35*\ydist) {\tiny $D_2$};
      \node[vertex, decision, ck_b          ](b) at (0*\xdist, 1.5*\ydist) {$b$};
      \node[vertex, implied,  ck_b, conflict](x) at (1*\xdist, 1*\ydist)   {$x$};
      \node[vertex, implied,  ck_b, conflict](y) at (1*\xdist, 2*\ydist)   {$y$};

      \draw (b) edge[myarr] (x);
      \draw (b) edge[myarr] (y);
      \draw (a) edge[myarr] (w);
      \draw (a) edge[myarr, bend left] (z);
      \draw (w) edge[myarr] (z);
    \end{tikzpicture}
    \caption{Conf. $D_1 = \neg x \lor \neg y \lor \neg z$}
    \label{fig:proof-conflict}
  \end{subfigure}
  \hfill
  \begin{subfigure}{0.30\textwidth}
    \centering
    \begin{tikzpicture}[scale=0.9, transform shape]
      \useasboundingbox (-1.5,-1) rectangle (3,2.4);

      \node at (-1.1, 1.8) {$\level = 1$};

      \node[vertex, decision, white](a) at (0*\xdist, 0*\ydist)   {$a$};
      \node[vertex, implied,  white](w) at (1*\xdist, 0*\ydist)   {$w$};
      \node[vertex, decision, ck_b](b)  at (0*\xdist, 1.5*\ydist) {$b$};
      \node[vertex, implied,  ck_b](x)  at (1*\xdist, 1*\ydist)   {$x$};
      \node[vertex, implied,  ck_b](y)  at (1*\xdist, 2*\ydist)   {$y$};
      \node[vertex, implied,  ck_b](nz) at (2*\xdist, 1.5*\ydist) {$\neg z$};
      \node at (2.35*\xdist, 1.15*\ydist) {\tiny $D_1$};

      \draw (b) edge[myarr] (x);
      \draw (b) edge[myarr] (y);
      \draw (x) edge[myarr] (nz);
      \draw (y) edge[myarr] (nz);
    \end{tikzpicture}
    \caption{Undo chunk $ck_a$.}
    \label{fig:proof-backtrack}
  \end{subfigure}
  \hfill
  \begin{subfigure}{0.30\textwidth}
    \centering
    \begin{tikzpicture}[scale=0.9, transform shape]
      \useasboundingbox (-1.5,-1) rectangle (3,2.4);
      \node at (-1.1, 0) {$\level = 1$};
      \node at (-1.1, 1.8) {$\level = 2$};

      \node[vertex, decision, ck_c, conflict](a)  at (0*\xdist,  1.5*\ydist) {$a$};
      \node[vertex, implied , ck_c, conflict](w)  at (1*\xdist,  1.5*\ydist) {$w$};
      \node[vertex, decision, ck_b](b)            at (0*\xdist,    0*\ydist) {$b$};
      \node[vertex, implied,  ck_b](y)            at (1*\xdist,  0.5*\ydist) {$y$};
      \node[vertex, implied,  ck_b](x)            at (1*\xdist, -0.5*\ydist) {$x$};
      \node[vertex, implied,  ck_b, conflict](nz) at (2*\xdist,    0*\ydist) {$\neg z$};
      \node at (2.35*\xdist, -0.35*\ydist) {\tiny $D_1$};

      \draw (b) edge[myarr] (x);
      \draw (b) edge[myarr] (y);
      \draw (a) edge[myarr] (w);
      \draw (x) edge[myarr] (nz);
      \draw (y) edge[myarr] (nz);
    \end{tikzpicture}
    \caption{Conf. $D_2 = \neg a \lor \neg w \lor z$.}
    \label{fig:proof-loop}
  \end{subfigure}
  \caption{Potential loop in conflict discovery with arbitrary backtracking.}
  \label{fig:proof-loop-example}
\end{figure}

\begin{example}\label{ex:GB:cost:safeguards}
  Consider \Cref{fig:proof-loop-example}. The implication graph contains two decision variables $a$ and $b$, with their respective chunks $ck_a$ and $ck_b$.
  On \Cref{fig:proof-conflict}, BCP (\Cref{sec:BCP}) discovers a conflict clause $D_1 = \neg x \lor \neg y \lor \neg z$, which has a single literal in the chunk $ck_a$. The chunk $ck_a$ can then be undone as shown in \Cref{fig:proof-backtrack} to fix the conflict. The literal $\neg z$ is added to $ck_b$. CDCL might decide $a$ again, and discovers another conflict $D_2 = \neg a \lor \neg w \lor z$ (\Cref{fig:proof-loop}). If now $ck_b$ is undone, and the decision $b$ is taken, we are back to the same state as \Cref{fig:proof-conflict}.
\end{example}

As further elaborated in \Cref{sec:chunk:repair}, the choice of chunk for backtracking influences the clause that is learned after conflict analysis.
We write 1UIP$_{\Gamma_i}(C)$ to denote the first UIP found when unassigning chunks $\Gamma_i$ for the conflict clause $C$.
The \emph{chunk level} of a chunk $ck_\ell$, written $\level_{\chunks}(ck_\ell)$, is the decision level $\level(\ell)$ of its decision literal $\ell$.
We extend the notation to sets of chunks $\Gamma_i$ as $\level_{\chunks}(\Gamma_i) = \max_{ck \in \Gamma_i} \level_{\chunks}(ck)$.

\paragraph{Safeguard.}
The loop in \Cref{ex:GB:cost:safeguards} arises because neither conflict leads to learning a new clause. In \Cref{fig:proof-conflict}, $D_1$ is already a UIP with respect to $ck_a$, i.e., 1UIP$_{ck_a}(D_1) = D_1$. Later, in \Cref{fig:proof-loop}, $D_2$ is also a UIP with respect to $ck_b$.

Redundant clauses can also be learned in CB. However, progress is ensured by the ordering of decision levels.
Indeed, in CB, a literal flipped due to a conflict will remain on the trail at least until another flipped literal is added at a lower level. It is then impossible to encounter infinitely many conflicts before reaching decision level 0, and progress is guaranteed. 

In GB, this ordering is weakened as follows: candidate chunks either lead to learning a new clause, or contain the most recent chunk in the conflict.

\begin{definition}[Terminating backtrack candidates\label{def:term:bCands}]
  Let $C$ be a conflicting clause in the clause set $F$.
Let $\chunks(\p)$ be the set of all chunks in the current partial assignment $\p$, and $2^{\chunks(\p)}$ be the powerset of $\chunks(\p)$. The set $\Gamma$ of \emph{terminating backtrack candidates} is defined as:
\begin{equation*}
  \Gamma = \left\{\Gamma_i \in 2^{\chunks(\p)}: |\Gamma_i \cap \chunks(C)| = 1 \land \Bigl(\text{\normalfont{1UIP}}_{\Gamma_i}(C) \notin F \lor \level_{\chunks}\bigl(\Gamma_i \cap \chunks(C)\bigr) = \level(C)\Bigr)\right\}
\end{equation*}
\end{definition}

Selecting $\Gamma^*$ from the terminating backtrack candidates guarantees the termination of GB while allowing the preservation of heavy literals.
The first condition $\text{\normalfont{1UIP}}_{\Gamma_i}(C) \notin F$ ensures that the learned clause is new. Alternatively, if the second condition $\level_{\chunks}(\Gamma_i \cap \chunks(C)) = \level(C)$ holds, the chunks are ordered and an identical argument to CB applies.

\begin{remark}
  It is not always trivial to check the redundancy of the learned clause. Let $C_1 \in F$ be the conflict detected by \Call{BCP}{} and $C_2 \in F$ be another conflict. If $\text{\normalfont{1UIP}}_{\Gamma_i}(C_1) = C_2$, then the first condition fails. A simple way to detect such a situation is to go through the watch lists of the literals in the learned clause and check whether it is subsumed in the formula.
  However, this case is very rare in practice and can be ignored in an efficient implementation.
\end{remark}

In \Cref{ex:GB:cost:safeguards}, backtracking $ck_a$ would not learn a new clause and $ck_b$ is more recent in the conflict $\level_{\chunks}(ck_a) < \level(C)$. Therefore $ck_a$ is not a terminating backtrack candidate.

In \Cref{ex:gb:chunk-selection}, our safeguard explains why we did not select $ck_b$ for backtracking. $C_8$ is already a UIP with respect to $ck_b$. Backtracking $ck_b$ does not learn a new clause and $ck_b$ is not the most recent chunk in the conflict.

\paragraph{Chunk Selection.} Selecting the chunk set $\Gamma^*$ becomes a minimization problem over the set of terminating backtrack candidates $\Gamma$. That is:
\begin{equation}
  \label{eq:chunk-selection}
  \Gamma^* = \Call{SelectChunk}{C, \weight} = \argmin_{\Gamma_i \in \Gamma} \weight\left(\bigcup_{ck \in \Gamma_i} ck\right)
\end{equation}

\begin{remark}
  Note that, if the cost function $\weight$ only returns positive values, \Call{SelectChunk}{} only selects backtrack candidates with a single chunk. However, in some cases, it might be beneficial to backtrack multiple chunks together (see \Cref{sec:reduction,sec:chunk-merging}).
\end{remark}


\subsection{Conflict Analysis \& Backtracking}\label{sec:chunk:repair}
Our conflict analysis procedure within GB is summarized in
\Cref{alg:conflict-analysis}, showcasing a variant of the standard 1UIP scheme~\cite{DBLP:journals/tc/Marques-SilvaS99} adapted to chunks. Binary resolution between the clauses $C$ and $D$ over a literal $\ell$ is denoted by $C \otimes_\ell D$, which is defined as $(C \setminus \{\neg \ell\}) \cup (D \setminus \{\ell\})$.

\begin{example}\label{ex:gb:analyze}
  In \Cref{fig:motivation-conflict}, the call \Call{Analyze}{$C_8, \{ck_a\}$} in
  \Cref{alg:conflict-analysis}
  handles $y$ first, since it is the last literal on the trail belonging to $ck_a$. It will perform resolution on $\reason(y) = C_3$, giving the clause $C_9' = C_8 \otimes_y C_3 = \neg w \lor \neg x \lor \neg z \lor \neg h_2$. The procedure ends with the learned clause $C_9'$ since $w$ is the last literal belonging to $ck_a$.
\end{example}


\begin{algorithm}[t]
  \caption{Conflict Analysis.}
  \label{alg:conflict-analysis}
  \begin{algorithmic}[1]
  \Procedure{Analyze}{$C, \blue{\Gamma^*}$}
    \State $D \gets C$
    \Comment{\parbox{0.5\textwidth}{Current learned clause.}}
    \State $n \gets \left|\blue{D \cap \bigcup_{ck \in \Gamma^*} ck}\right|$
    \Comment{\parbox{0.5\textwidth}{Number of literals in $D$ that will be undone}}
    \For{$\ell \in \p$}
      \Comment{\parbox{0.5\textwidth}{Iterate right to left}}
      \If {$\ell \in D \land \blue{\Gamma^* \cap \chunks(\ell) \neq \emptyset}$}
        \State $D \gets D\otimes_\ell \reason(\ell)$
        \Comment{\parbox{0.5\textwidth}{Binary Resolution}}
        \State $n \gets \left|\blue{D \cap \bigcup_{ck \in \Gamma^*} ck}\right|$
      \EndIf
      \If {$n = 1$}
        \Comment{\parbox{0.5\textwidth}{Found a UIP}}
        \State \Return $D$
      \EndIf
    \EndFor
  \EndProcedure
\end{algorithmic}

\end{algorithm}

As shown in \Cref{ex:gb:analyze}, chunk-based conflict analysis in GB adapts to any chunk and produces an implying learned clause.
The backtracking procedure of GB is presented in \Cref{alg:backtracking}.
We can now retain literals whose decision levels are higher than the backtracked chunk. In \Cref{fig:motivation-gb}, we keep $b, c, d, y, z, x, h_1$ and $h_2$ in the trail, whose decision levels are higher than $a$. This is why at line~\ref{line:backtrack:update-level} of \Cref{alg:backtracking}, we need to recalculate the decision level of each literal after the lowest undone decision.
Line~\ref{line:backtrack:repropagate} of \Cref{alg:backtracking} ensures the correctness of the two-watched-literal scheme in GB (\Cref{thm:gb:soundness}), as explained next.

\begin{algorithm}[t]
  \caption{Backtracking.}
  \label{alg:backtracking}
  \begin{algorithmic}[1]
  \Procedure{Undo}{\blue{$\Gamma^*$}}
    \For{$\ell \in \p$}
    \Comment{\parbox{0.45\textwidth}{Iterate left to right}}
        \If{\blue{$\chunks(\ell) \cap \Gamma^* \neq \emptyset$}}
            \State $\p \gets \p \setminus \{\ell\}$
            \Comment{\parbox{0.45\textwidth}{Unassign $\ell$}}
            \State $\chunks(\ell) \gets \emptyset,$
                   $\reason(\ell) \gets \blacksquare,$
                   $\level(\ell) \gets \infty$
            \State \Continue
        \EndIf
        \State \blue{$\level(\ell) \gets \Call{UpdateLevel}{\ell}$}
        \label{line:backtrack:update-level}
        \Comment{\parbox{0.45\textwidth}{Recalculate the decision level of $\ell$}}
        \If{\blue{$\crossChunks(\ell) \cap \Gamma^* \neq \emptyset$}}
            \Comment{\parbox{0.45\textwidth}{see \Cref{sec:BCP}}}
            \State $\blue{\trail \gets \trail \setminus \{\ell\}}$
            \label{line:backtrack:repropagate}
            \Comment{\parbox{0.45\textwidth}{$\ell$ needs to be repropagated}}
        \EndIf
    \EndFor
    \State $\trail \gets \p \cap \trail$
    \Comment{\parbox{0.45\textwidth}{Remove the unassigned literals}}
    \State $\q \gets \p \setminus \trail$
  \EndProcedure
\end{algorithmic}

\end{algorithm}

\subsection{Unit Propagation \& Watch Literals}
\label{sec:BCP}
One of the reasons for the success of NCB in CDCL-based SAT solvers is the use of fast unit propagation algorithms.
The most common technique to achieve Boolean Constraint Propagation (BCP) is the two-watched-literal scheme~\cite{DBLP:conf/dac/MoskewiczMZZM01}.
Each clause is said to be watched by two of its literals -- called \emph{watched literals} or \emph{watchers} -- such that the solver only needs to look at clauses watched by the negation of the currently being propagated literal to ensure the discovery of every conflict and unit clause.
We write $\wl(\ell)$ to denote the watch list of a literal $\ell$, that is the list of clauses watched by $\ell$. The following invariant is the guiding principle for existing backtracking solutions:

\begin{invariant}[Watched literals~\cite{DBLP:conf/sat/CoutelierFK24}]
    \label{inv:watched-literals}
    Consider the trail \(\p = \trail \cdot \q\). For each clause $C \in F$ watched by $c_1, c_2$, we have
    $\neg c_1 \in \trail \Rightarrow c_2 \in \p$.
\end{invariant}
It states that for each clause $C$, if one of its watchers $c_1$ is falsified and propagated, then the other watcher $c_2$ must be satisfied. During propagation, \Call{BCP}{} checks whether a literal $\ell$ can be added to the propagated set, i.e., whether it is possible to enforce $\neg c_1 \in (\trail \cdot \ell) \Rightarrow c_2 \in \p$. If this is not possible, then a conflict is raised.
Note that conflicts do not violate the invariant, since they are detected while the watched literals are still in the queue $\q$.
\Cref{inv:watched-literals} is not trivially preserved in GB upon backtracking.
We, thus, revise it into \cref{inv:backtrack-compatible-watched-literals} in order to show soundness of GB in \Cref{thm:gb:soundness}.

\paragraph{\bf Limitations of Watched Literals in GB.}
Invariant~\ref{inv:watched-literals} is naturally maintained by NCB, since the partial assignment behaves exactly like a stack.
If during the propagation of a literal $\ell$ a clause is detected to be satisfied, it will remain so at least until $\ell$ is unassigned. That is, the right side $c_2 \in \p$ holds at least as long as the left side $\neg c_1 \in \trail$ holds. Therefore, backtracking does not violate the invariant.

This property does not apply to GB by construction. In \Cref{fig:motivation}, we propagated $x$ before $y$, yet $y$ was unassigned first.
Therefore, backtracking does not preserve \Cref{inv:watched-literals}. The clause $C_3 = y \lor \neg x \lor \neg w$ would satisfy \Cref{inv:watched-literals} if $c_1 = \neg x$ and $c_2 = y$, but no longer after backtracking since $y \notin \p$.
To restore it, $x$ needs to be removed from the propagated set $\trail$.

A surprisingly effective, yet simple, solution used in CB is to repropagate all literals in the trail located after the first literal of the backtracked level, called the restoration point~\cite{DBLP:conf/sat/NadelR18,DBLP:conf/sat/CoutelierFK24}.
In CB, repropagation is cheap because watch lists are already cleaned up, and the restoration point is usually relatively close to the end of the trail.
However, in GB, this point is not trivial to find, and may be located early, leading to a lot of unnecessary propagations.

Another interesting aspect of GB is that there is no total ordering between literals. In (N)CB, it is possible to choose literals with the highest decision level to ensure that the watchers of a clause will always be unassigned before the other literals in the clause. In GB, this is no longer the case. Consider again $C_3$ in the example of \Cref{fig:motivation-conflict}, we would like to select the watchers such that they will be unassigned after backtracking. It is clear that $y$ should be one of the watched literals, since it belongs to both $ck_a$ and $ck_b$. However, we can either choose $\neg w$ or $\neg x$ as the second watcher. Let us select $\neg w$. In this case, after backtracking $ck_b$, the clause violates \Cref{inv:watched-literals}, since $\neg w \in \trail$ but $y \notin \p$. If we had selected $\neg x$ instead, backtracking $ck_a$ would also violate the invariant. We cannot find a perfect watcher since $\chunks(w) \nsubseteq \chunks(x)$ and $\chunks(x) \nsubseteq \chunks(w)$.

\paragraph{\bf Cross-Chunks of Watched Literals in GB.}
To solve the aforementioned issues on watch literals in GB, we introduce the concept of \emph{cross-chunks}, written as $\crossChunks(\ell)$ and used in \Cref{alg:backtracking}.
The cross-chunks of a literal $\ell$ is the set of chunks that, when backtracked, require $\ell$ to be repropagated.
In other words, if $ck \in \crossChunks(\ell)$ and $ck$ is undone, then $\ell$ needs to be repropagated; as ensured in line~\ref{line:backtrack:repropagate} of \Cref{alg:backtracking}:
The literal $\ell$ is removed from the propagated set $\trail$ and put
in the queue $\q$ to be repropagated.
Cross-chunks allow skipping most of the repropagations.
The backtracking procedure of \Cref{alg:backtracking}
preserves the following stronger invariant:
\newpage
\begin{invariant}[GB watched literals]
  Consider the trail \(\p = \trail \cdot \q\). For each clause $C \in F$ watched by $c_1, c_2$, we have
    $\neg c_1 \in \trail \Rightarrow \left[c_2 \in \p \land \chunks(c_2) \subseteq \bigl(\chunks(c_1) \cup \crossChunks(c_1)\bigr)\right]$.
  \label{inv:backtrack-compatible-watched-literals}
\end{invariant}

The revised invariant states that if a watched literal $c_1$ is falsified and propagated, then the other watcher $c_2$ must be satisfied and $c_2$ will hold as long as $\neg c_1$ is in the propagated set. That is, if $c_2$ is unassigned during backtracking, then $c_1$ will either be unassigned as well, or repropagated. During \Call{BCP}{}, we update the cross-chunks of the watched literals to keep a record of literals that need to be repropagated after backtracking.

\begin{algorithm}[t]
  \caption{Boolean Constraint Propagation.}
  \label{alg:bcp}
  \begin{algorithmic}[1]
  \Procedure{BCP}{ }
    \While {$\q \neq \emptyset$}
      \State $\ell \gets \q[0],$
             $c_1 \gets \neg \ell$
      \State \blue{$\crossChunks(\ell) \gets \chunks(\ell)$}
      \label{line:prop:reset-eta}
      \Comment{\parbox{0.45\textwidth}{Reset and enforce $\chunks(\ell) \subseteq \crossChunks(\ell)$}}
      \For{$C \in \wl(c_1)$}
        \State $c_2 \gets $ the other watched literal in $C$
        \If {$c_2 \in \p \blue{\land \chunks(c_2) \subseteq \crossChunks(c_1)}$}
          \label{line:prop:skip-condition}
          \State \Continue
          \label{line:prop:watched-literal-satisfied}
          \Comment{\parbox{0.45\textwidth}{C already satisfies \Cref{inv:backtrack-compatible-watched-literals}}}
        \EndIf

        \State $r \gets$ \Call{SearchReplacement}{$C, c_1, c_2$}
        \State $\wl(c_1) \gets \wl(c_1) \setminus \{C\}$
        \Comment{\parbox{0.45\textwidth}{Evict $C$ from the watch list of $c_1$}}
        \State $\wl(r) \gets \wl(r) \cup \{C\}$
        \Comment{\parbox{0.45\textwidth}{Add $C$ to the watch list of $r$}}
        \If {$\neg r \notin \p$}
          \Comment{\parbox{0.45\textwidth}{Good replacement found}}
          \State \Continue
          \label{line:prop:watched-literal-replaced}
        \EndIf
        \label{line:prop:watched-literal-replaced-higher-level}
        \If {$\neg c_2 \in \p$}
        \Comment{\parbox{0.45\textwidth}{Conflict}}
          \State \Return $C$
          \label{line:prop:conflict}
        \EndIf
        \If{\blue{$c_2 \in \p$}}
          \label{line:prop:missed-cross-chunk-implication-start}
          \Comment{\parbox{0.45\textwidth}{Missed implication}}
          \State \blue{ $\crossChunks(r) \gets \crossChunks(r) \cup \chunks(c_2)$}
          \State \blue{\Continue}
        \EndIf
        \label{line:prop:missed-cross-chunk-implication-end}
        \State $\q \gets \q \cdot c_2, \reason(c_2) \gets C, \level(c_2) \gets \level(C \setminus \{c_2\})$
        \State \blue{
          $\chunks(c_2) \gets \chunks(C \setminus \{c_2\})$,
          $\crossChunks(r) \gets \crossChunks(r) \cup \chunks(c_2)$}
        \label{line:prop:cross-chunk-implication}
      \EndFor
      \State $\q \gets \q \setminus \{\ell\}, \trail \gets \trail \cdot \ell$
    \EndWhile
  \State \Return $\blacksquare$
  \EndProcedure
\end{algorithmic}

\end{algorithm}

\paragraph{\bf Boolean Constraint Propagation.} With the
adjustments to cross-chunks of watched literals and
\Cref{inv:backtrack-compatible-watched-literals},
we adapt BCP to GB, as shown in
\Cref{alg:bcp}.
Compared to standard two-watched-literal BCP algorithms, the main
difference of \Cref{alg:bcp} comes with the handling of chunks,
cross-chunks, and skip conditions.
We note that, in practice, cross-chunks are always used in union with the chunks of
a literal. Therefore, we improve
efficiency by enforcing that $\chunks(\ell) \subseteq
\crossChunks(\ell)$ (\Cref{line:prop:reset-eta}). This way, when checking whether a clause can be
skipped during BCP, only one set
comparison is required.
In the example of \Cref{fig:motivation-conflict}, let us set the watchers of the clause $C_3$ to be $y$ and $\neg x$. When propagating $x$, we will imply $y$ with $\chunks(y) \gets \{ck_a, ck_b\}$ and update $\crossChunks(x) \gets \{ck_a, ck_b\}$ to satisfy \Cref{inv:backtrack-compatible-watched-literals}. Upon undoing $ck_a$, $x$ will be repropagated since $ck_a \in \crossChunks(x)$.


\subsection{CDCL-Based SAT Solving Using GB}
Our CDCL-based SAT solving approach using GB is summarized in
\Cref{alg:cdcl}, exploiting cost-based trail repair, conflict
analysis, backtracking, and BCP,  as presented in \Cref{sec:chunk-selection,sec:chunk:repair,sec:BCP}.
For conciseness, we only provide proof sketches here and with details in \Cref{app:soundness}.

\begin{lemma}[Graph backtracking invariant]\label{lem:gb:inv}
  The graph backtracking CDCL-based approach of \Cref{alg:cdcl} preserves \Cref{inv:backtrack-compatible-watched-literals}.
\end{lemma}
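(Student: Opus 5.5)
We prove the statement by induction over the execution of \Cref{alg:cdcl}: we show that \Cref{inv:backtrack-compatible-watched-literals} holds initially and is preserved by every operation that modifies $\trail$, $\p$, $\chunks$, $\crossChunks$ or the watch lists. Initially $\trail=\emptyset$, so the invariant holds vacuously. The operations to check are: (i) pushing a decision onto $\q$; (ii) one iteration of \Call{BCP}{} (\Cref{alg:bcp}), including the final move of $\ell$ from $\q$ to $\trail$; (iii) \Call{Undo}{$\Gamma^*$} (\Cref{alg:backtracking}); and (iv) adding a learned clause. We treat each clause $C$ with watchers $c_1,c_2$ separately.

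\textbf{Decisions and learned clauses.} A decision only extends $\q$. It leaves $\trail$ and all existing $\chunks$ and $\crossChunks$ values unchanged, so the invariant is preserved. For a learned clause $C'$ produced by \Call{Analyze}{} and added after \Call{Undo}{}, we choose as watchers the implied literal $\ell$ (now in $\q$) and a falsified literal of $C'$ with maximal position. The remaining case is the one where that second watcher $\neg c_1$ is already in $\trail$. There we must argue that $\chunks(\ell)=\chunks(C'\setminus\{\ell\})\supseteq\chunks(c_1)$ can be dominated, or else place $c_1$ back into $\q$ and set $\crossChunks$ appropriately. I expect this is handled by the implementation detail that the learned clause is watched so that the premise $\neg c_1\in\trail$ does not yet hold, for example by requeuing. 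I would state this explicitly as part of \Cref{alg:cdcl}.

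\textbf{BCP.} When $\ell=\q[0]$ is processed, line~\ref{line:prop:reset-eta} resets $\crossChunks(\ell)$ to $\chunks(\ell)$. This affects the invariant only for clauses watched by $c_1=\neg\ell$, and those are exactly the clauses in $\wl(c_1)$ that the loop visits before $\ell$ enters $\trail$. For every other literal already in $\trail$, $\crossChunks$ only grows in this iteration, so the right-hand side of the invariant only becomes weaker as a requirement. For each $C\in\wl(c_1)$ we case-split on how the loop exits.
\begin{itemize}
\item Skip (line~\ref{line:prop:watched-literal-satisfied}): $c_2\in\p$ and $\chunks(c_2)\subseteq\crossChunks(c_1)$, which is the invariant for $C$.
\item Replacement: $C$ leaves $\wl(c_1)$ and the new watcher $r$ is not falsified, so the premise fails for $r$. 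For the pair $(c_2,r)$, the premise $\neg c_2\in\trail$ with $c_2$ unchanged held before, and the invariant for it was established earlier. If instead $\neg r\in\p$ with $c_2$ true, the missed-implication branch sets $\crossChunks(r)\supseteq\chunks(c_2)$, giving the invariant with the roles swapped, once $\neg r$ is in $\trail$.
\item Implication (line~\ref{line:prop:cross-chunk-implication}): $c_2$ becomes satisfied, and $\crossChunks(r)\supseteq\chunks(c_2)$ is recorded.
\item Conflict: the loop returns while $\ell$ is still in $\q$, so the premise never holds.
\end{itemize}
A subtle point is that $r$ may equal some literal whose cross-chunk set is later reset. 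Resetting happens only when $r$'s negation is propagated, and that re-triggers a visit of $\wl(\neg r)$, so the invariant is re-established there.

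\textbf{Undo.} This is the main obstacle. Fix $C$ with $\neg c_1\in\trail$ after \Call{Undo}{$\Gamma^*$}. Then $\neg c_1$ was in $\trail$ before, since \Call{Undo}{} never adds to $\trail$. Moreover $\chunks(c_1)\cap\Gamma^*=\emptyset$ and $\crossChunks(c_1)\cap\Gamma^*=\emptyset$, because otherwise $c_1$ would have been unassigned or removed by line~\ref{line:backtrack:repropagate}. By the inductive hypothesis, $c_2\in\p$ and $\chunks(c_2)\subseteq\chunks(c_1)\cup\crossChunks(c_1)$, so $\chunks(c_2)\cap\Gamma^*=\emptyset$ and $c_2$ survives. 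It remains to show that the inclusion persists after \Call{Undo}{} updates chunks. The chunk sets of surviving literals are unchanged, since a literal's chunks are exactly the decisions it depends on and none of these lie in $\Gamma^*$; only levels are recomputed. So the inclusion carries over unchanged. The delicate point is confirming that chunk membership is genuinely monotone along reasons, that is $\chunks(\ell)=\chunks(\reason(\ell)\setminus\{\ell\})$, so that survival of $c_2$ follows from disjointness with $\Gamma^*$. I would prove this as a small auxiliary lemma before the case analysis.
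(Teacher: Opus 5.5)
\textbf{Gap: the learned-clause step.} Your decomposition (decisions, BCP, \Call{Undo}{}, learned clauses) is the same as the paper's, and your BCP and \Call{Undo}{} cases follow its arguments (the \Call{Undo}{} case in more detail). The learned-clause step, however, you leave open, and you propose to change \Cref{alg:cdcl} by requeuing $c_1$ or choosing the second watcher specially. No change is needed. The last line of \Cref{alg:cdcl} already executes $\crossChunks(c_1)\gets\crossChunks(c_1)\cup\chunks(c_2)$ when $D$ is attached to $\wl(c_1)$ and $\wl(c_2)$. Since $c_2$ has just been pushed onto $\q$, we have $c_2\in\p$, and after the update $\chunks(c_2)\subseteq\crossChunks(c_1)$. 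So the right-hand side of \Cref{inv:backtrack-compatible-watched-literals} holds whether or not $\neg c_1\in\trail$. The symmetric premise $\neg c_2\in\trail$ is false because $c_2\in\p$. This works for any falsified $c_1\in D$, so your ``maximal position'' choice is unnecessary. Assigning $\chunks(c_2)$ to the previously unassigned $c_2$ also cannot affect any other instance of the invariant. Its variable was unassigned, so it satisfies no premise $\neg c_1\in\trail$. By the inductive hypothesis, it is also not the other watcher of any clause whose premise holds. This is exactly the paper's argument that pushing the implication and updating the cross-chunks of the other watcher suffices.

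\textbf{Smaller points.}
\begin{itemize}
\item The auxiliary lemma you plan for \Call{Undo}{} is unnecessary. \Cref{alg:backtracking} unassigns exactly the literals with $\chunks(\ell)\cap\Gamma^*\neq\emptyset$ and never modifies $\chunks$ or $\crossChunks$ of the literals it keeps. So $\chunks(c_2)\cap\Gamma^*=\emptyset$ already means $c_2$ survives with unchanged chunks.
\item In your replacement bullet, the new pair $(r,c_2)$ is fine for the premise $\neg c_2\in\trail$ because that premise is false. Otherwise the inductive hypothesis for the old pair $(c_1,c_2)$ would give $c_1\in\p$, contradicting $\neg c_1=\ell\in\p$.
\item In the conflict bullet, $C$ has already been moved to $\wl(r)$. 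What you need is therefore $\neg r\notin\trail$, together with $\neg c_2\notin\trail$, which follows by the argument just given. The first condition holds only if \Call{SearchReplacement}{}, when no non-falsified replacement exists, returns a literal whose negation is not yet propagated, e.g.\ $c_1$ itself. The paper makes this assumption implicitly as well.
\item The clauses revisited when $\neg r$ is propagated are those in $\wl(r)$, not $\wl(\neg r)$.
\end{itemize}
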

\begin{proof}
  \Cref{inv:backtrack-compatible-watched-literals} is maintained by BCP, since BCP only adds a literal $\ell$ to the propagated set $\trail$ if all clauses satisfy $\neg c_1 \in (\trail \cdot \ell) \Rightarrow \left[c_2 \in \p \land \chunks(c_2) \subseteq \bigl(\chunks(c_1) \cup \crossChunks(c_1)\bigr)\right]$. Only the clauses watched by $\neg \ell$ are checked, as they are the only ones for which $\neg c_1 \in (\trail \cdot \ell)$ can change from false to true.

  \Cref{inv:backtrack-compatible-watched-literals} is also maintained by backtracking, since it ensures $\neg c_1$ is unassigned or removed from $\trail$ before $c_2$ is unassigned.

  The other operations of \Cref{alg:cdcl} do not modify the watch lists nor the propagated set, and therefore do not violate   \Cref{inv:backtrack-compatible-watched-literals}.
\end{proof}
\begin{lemma}[Graph backtracking trail]\label{lemma:gb:conflict}
  In \Cref{alg:cdcl}, no clause is falsified by the propagated set $\trail$, or unit under $\trail$ and not satisfied by $\p$.
\end{lemma}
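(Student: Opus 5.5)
We derive the statement directly from \Cref{inv:backtrack-compatible-watched-literals}, which holds throughout \Cref{alg:cdcl} by \Cref{lem:gb:inv}. We also use the standard structural fact that every clause $C \in F$ with at least two literals is watched by two distinct literals $c_1, c_2 \in C$; unit clauses are handled separately at level zero or kept in $\q$ as usual. The argument is by contradiction in both cases, fixing a clause $C$ watched by $c_1,c_2$.

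\emph{Falsified case.} Suppose $C$ is falsified by $\trail$, i.e.\ $\neg c \in \trail$ for every $c \in C$. In particular $\neg c_1 \in \trail$. The invariant then gives $c_2 \in \p$. But we also have $\neg c_2 \in \trail \subseteq \p$, and an assignment never contains a literal together with its negation. This is a contradiction.

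\emph{Unit case.} Suppose $C$ is unit under $\trail$ in some literal $\ell$, and $C$ is not satisfied by $\p$. Then every literal of $C$ other than $\ell$ is falsified in $\trail$. Since $c_1 \neq c_2$, at least one watcher, say $c_1$ after renaming, differs from $\ell$, so $\neg c_1 \in \trail$. The invariant yields $c_2 \in \p$, so $C \cap \p \neq \emptyset$ and $C$ is satisfied by $\p$. This contradicts the assumption. The chunk and cross-chunk conjunct of \Cref{inv:backtrack-compatible-watched-literals} is not needed here; only the weaker consequence $\neg c_1 \in \trail \Rightarrow c_2 \in \p$ is used.

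\emph{Main obstacle.} The delicate point is showing that the invariant really holds at every point of \Cref{alg:cdcl} where the lemma is claimed, and in particular after a clause is learned. A newly learned clause must be attached with watchers chosen so that the invariant holds immediately. We would pick the implied UIP literal as $c_2$ and, as $c_1$, a literal that remains in $\p$. Alternatively, $c_1$ may be the one that is requeued by \Call{Undo}{}, i.e.\ removed from $\trail$, since $\chunks(c_1) \cap \Gamma^*$ or $\crossChunks(c_1) \cap \Gamma^*$ is nonempty. If \Cref{lem:gb:inv} already covers learning, nothing more is needed. Otherwise we would add the check that after \Call{Undo}{$\Gamma^*$} and the assertion of the UIP literal, the learned clause satisfies the invariant: its UIP watcher is satisfied, and its chunks lie in $\chunks(c_1)\cup\crossChunks(c_1)$ by the chunk assignment of implied literals.

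Finally, we must handle the transient states inside \Call{BCP}{}, where a conflict clause is found before its falsified watcher enters $\trail$. The lemma concerns $\trail$ only, and a conflicting literal is still in $\q$ when the conflict is returned, so the argument above is unaffected.
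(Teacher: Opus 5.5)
Your proof is correct and is essentially the paper's argument. Both derive a contradiction with \Cref{inv:backtrack-compatible-watched-literals} (via \Cref{lem:gb:inv}) by noting that one watcher is falsified in $\trail$ while the other is not satisfied by $\p$; you just make the falsified and unit cases explicit where the paper merges them, and your \emph{main obstacle} paragraph is unnecessary, since the paper's proof of \Cref{lem:gb:inv} already covers the learned clause through the update $\crossChunks(c_1) \gets \crossChunks(c_1) \cup \chunks(c_2)$ in \Cref{alg:cdcl}.
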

\begin{proof}
  We prove by contradiction. Let a clause $C$ be falsified by $\trail$ or be unit under $\trail$ and not satisfied by $\p$. Then one of its watched literals $c_1$ is falsified and propagated, and the other watched literal $c_2$ is not satisfied. This contradicts \Cref{inv:backtrack-compatible-watched-literals} and \Cref{lem:gb:inv}.
\end{proof}

\begin{theorem}[Graph backtracking]\label{thm:gb:termination}\label{thm:gb:soundness}\label{thm:gb:completeness}
\Cref{alg:cdcl} is sound, and terminating.
\end{theorem}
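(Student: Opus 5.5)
We split the statement into soundness and termination, following the standard argument for CDCL with chronological backtracking and adapting it to chunks.

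\textbf{Soundness.} Soundness has two directions.
\begin{itemize}
\item \emph{SAT answers.} When \Cref{alg:cdcl} reports SAT, the queue $\q$ is empty, so $\p = \trail$, and all variables are assigned. By \Cref{lemma:gb:conflict}, no clause is falsified by $\trail$. Since the assignment is total, every clause of $F$ is therefore satisfied, so $\p$ is a model.
\item \emph{UNSAT answers.} We first show by induction on the run that every learned clause is a logical consequence of $F$. \Call{Analyze}{} in \Cref{alg:conflict-analysis} only performs binary resolutions $D \otimes_\ell \reason(\ell)$ between the current clause and justifications, which are clauses of $F$ or earlier learned clauses. Resolution is sound, so each learned clause is entailed by the original formula.
\end{itemize}

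For UNSAT we must also check that the empty clause is derived only legitimately. UNSAT is reported when a conflict $C$ occurs with $\chunks(C) = \emptyset$, i.e.\ every literal of $C$ sits at level $0$. In that case we argue that resolving away all level-$0$ literals along their justifications yields $\square$ as a consequence of $F$. This step also needs that level-$0$ literals are themselves consequences of $F$, which holds because they are implied by entailed clauses without any decision.

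We also verify that \Call{Analyze}{} is well defined. The set $\Gamma^*$ contains exactly one chunk $ck$ of $\chunks(C)$. Scanning the trail from right to left, we resolve on literals of $\Gamma^*$; each resolution replaces a literal by its antecedents, which appear earlier in $\p$. Resolution stops once a single literal of the undone chunks remains. Termination of this loop is guaranteed because the decision of $ck$ has no reason and is reached last. After \Call{Undo}{$\Gamma^*$}, the learned clause therefore has exactly one unassigned literal and is unit, so propagation flips it.

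\textbf{Termination.} We argue that no clause-learning cycle repeats forever. Since there are finitely many clauses over $\mathcal{V}$, only finitely many conflicts can learn a new clause, namely those with $\text{1UIP}_{\Gamma_i}(C) \notin F$. It therefore suffices to show that there cannot be infinitely many consecutive conflicts whose repair uses the second disjunct of \Cref{def:term:bCands}, $\level_{\chunks}(\Gamma^*\cap\chunks(C)) = \level(C)$, without learning anything new.

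For such conflicts we mimic the CB argument. We define a measure on the trail, for instance the vector counting, for each decision level $d$, the implied (flipped) literals at level $\le d$, compared lexicographically from the lowest level upward. The most recent chunk in the conflict is undone, and the flipped UIP literal lands at a level strictly below $\level(C)$. This strictly increases the vector, provided no literal at a lower level is removed. The lexicographic order on these bounded vectors is well founded, and decisions and BCP without conflict are finite between conflicts, so the whole process terminates.

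\textbf{Main obstacle.} The hard part is this measure argument. In GB, undoing a chunk at level $\level(C)$ may still remove literals whose recomputed level (\Call{UpdateLevel}{}) changes, and levels of kept literals can decrease after the lower decision disappears. We would handle this by showing that undoing the maximal-level chunk only unassigns literals at level $\level(C)$. Any literal depending on that chunk has level at least the chunk's level, and since $\level(C)$ is the maximum, lower levels are untouched. This makes the lexicographic measure strictly increase on the CB-style steps. On new-clause steps the measure may decrease, but only finitely often, which closes the argument.
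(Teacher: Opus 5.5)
Your proof is correct and follows the paper's route. Soundness comes from \Cref{lemma:gb:conflict} plus soundness of resolution, and termination comes from the finiteness of new learned clauses together with a CB-style ordering argument for the remaining most-recent-chunk repairs; your lexicographic level vector makes precise the paper's ``hot potatoes only sink'' argument (like the paper, you tacitly assume $\Gamma^*$ contains no chunk below $\level(C)$, which holds for positive weights since $\Gamma^*$ is then a single chunk).

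One correction: undoing the top conflict chunk does not only unassign literals at level $\level(C)$, since GB conflicts need not sit at the top of the trail and that chunk can contain literals depending on later decisions. Your measure only needs what you actually justify, namely that nothing below $\level(C)$ is removed, together with the observation that kept literals whose level is recomputed were above $\level(C)$ and remain at levels $\geq \level(C)$.
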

\begin{proof}
  Soundness follows directly from \Cref{lemma:gb:conflict}: a solution is only returned when the propagated set $\p = \trail$ and all variables are assigned. From \Cref{lemma:gb:conflict}, no clause is falsified by $\p$, and therefore the solution is correct. Learned clauses are entailed by the original formula with binary resolution. Therefore, if the formula finds a conflict without any decision, then the empty clause is entailed by the original formula, and the formula is unsatisfiable.

  The terminating backtrack candidates of \Cref{sec:chunk-selection} guarantee termination. Upon conflict, either a new clause is learned, or the most recent chunk in the conflict is undone. In the former case, there are finitely many clauses entailed by the original formula, and therefore learning a new clause is progress. In the latter case, progress must be made because of the ordering of chunks similarly to CB (\Cref{sec:chunk-selection}).
\end{proof}

\begin{algorithm}[t]
  \caption{CDCL with Graph Backtracking.}
  \label{alg:cdcl}
  \begin{algorithmic}[1]
    \State $\p = \trail = \q =  \emptyset$
    \State $\reason(\ell) = \blacksquare, \level(\ell) = \infty$
    \State \blue{$\forall \ell.\ \chunks(\ell) = \crossChunks(\ell) = \emptyset$}
    \Procedure{CDCL}{$F, \blue{\weight}$}
        \State Fill the watch lists $\wl$ for $F$
        \While {$\top$}
            \State $C \gets \Call{BCP}{ }$ \Comment{\parbox{0.45\textwidth}{\Cref{alg:bcp}}}
            \If{$C = \blacksquare$}
                \If{$|\p| = |\mathcal{V}|$}
                \Comment{\parbox{0.45\textwidth}{All variables are assigned}}
                    \State \Return \textsc{SAT}
                \EndIf
                \State $\ell \gets \Call{Decide}{ }$
                \Comment{\parbox{0.45\textwidth}{No change from (N)CB}}
                \State $\q \gets \q \cdot \ell, \level(\ell) \gets \level(\p) + 1, \blue{\chunks(\ell) \gets \{ck_\ell\}}$
                \State \Continue
            \EndIf
            \If{\blue{$\chunks(C) = \emptyset$}}
                \State \Return \textsc{UNSAT}
            \EndIf
            \State \blue{$\Gamma^* \gets \Call{SelectChunk}{C, \weight}$}
            \Comment{\parbox{0.45\textwidth}{\Cref{eq:chunk-selection}}}
            \State \blue{$D \gets \Call{Analyze}{C, \Gamma^*}$}
            \Comment{\parbox{0.45\textwidth}{\Cref{alg:conflict-analysis}}}
            \State \blue{$\Call{Undo}{\Gamma^*}$}
            \Comment{\parbox{0.45\textwidth}{\Cref{alg:backtracking}}}
            \State $c_2 \gets$ the unassigned literals in $D$
            \State $\q \gets \q \cdot c_2, \reason(c_2) \gets D, \level(c_2) \gets \level(D \setminus \{c_2\}), \blue{\chunks(c_2) \gets \chunks(D \setminus \{c_2\})}$
            \State $F \gets F \cup \{D\}$
            \If{$|D| \geq 2$}
                \State $c_1 \gets $ another literal in $D$
                \State $\wl(c_1) \gets \wl(c_1) \cup \{D\}, \wl(c_2) \gets \wl(c_2) \cup \{D\}$
                \State \blue{$\crossChunks(c_1) \gets \crossChunks(c_1) \cup \chunks(c_2)$}
            \EndIf
        \EndWhile
    \EndProcedure
\end{algorithmic}

\end{algorithm}

	\section{Simulating (N)CB with GB}
\label{sec:reduction}
In this section, we show that
graph backtracking is a generalization of both NCB and CB. Importantly, we present
how to craft an appropriate cost function $\weight$ to simulate NCB and CB with GB.
As such, GB can be used as a replacement for (N)CB, while retaining the behavior and guarantees offered by (N)CB.

Let us define a non-deterministic transition system over the state $\langle\p, \level, \reason, \phi\rangle$, where $\phi$ is the set of clauses and $\p, \level, \reason$ are as defined in \Cref{sec:preliminaries}. The transitions $T$ are the standard CDCL steps of \emph{decide}, \emph{BCP}, \emph{analyze}, and \emph{backtrack} with associated conditions, e.g., conflict analysis is only triggered when a conflict is detected by BCP.

We say that a SAT algorithm $A_1$ \emph{simulates} another SAT algorithm $A_2$ iff for any formula $F$ we have the following: any sequence of solver states $\langle\p, \level, \reason, \phi\rangle$ executing $A_1$ on $F$ is possible to be  encountered executing $A_2$ on $F$. Note that 
we do not require the same sequence of states on $A_1,A_2$, but that the same states are reachable in $A_1,A_2$. Some non-determinism in simulating SAT algorithms lies in the order of implications and conflicts due to the orders of clauses and watch lists. 

\paragraph{Simulating CB.}
We write GB$_{\weight}$ as GB parametrized with the weight function $\weight$ and define the cost function $\weight_{\text{\normalfont{CB}}}$ to be:
\begin{equation*}
  \weight_{\text{\normalfont{CB}}}(\ell) = \begin{cases}
    -1 & \text{if } \level(\ell) \geq \level(C) \\
    1 & \text{otherwise}
  \end{cases}
\end{equation*}
where $C$ is the conflicting clause.

\begin{theorem}[GB simulates CB]
  \label{lem:cb-simulation}
  GB$_{\weight_{\text{\normalfont{CB}}}}$ simulates CB.
\end{theorem}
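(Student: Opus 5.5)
We argue by induction on the length of a run of GB$_{\weight_{\text{\normalfont{CB}}}}$ on a formula $F$. The invariant is that every reachable state $\langle\p,\level,\reason,\phi\rangle$ of GB is also reachable by CB, where CB backtracks to level $\level(C)-1$ after a conflict on $C$. Decisions are identical in both algorithms. BCP differs only in skip conditions and in the bookkeeping of $\chunks$ and $\crossChunks$, which are not part of the state. So for BCP it suffices to show that GB imposes no assignment order CB could not also produce under some watch-list and clause order; the nondeterminism in the definition of simulation covers this. The real work is in the conflict step: select, analyze, undo, then imply the flipped literal.

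\textbf{Step 1: the selected chunk set.} Let $C$ be a conflict with $\level(C)=k$. The chunks whose decision literal has level $\geq k$ are exactly the chunks of decisions $d_k,d_{k+1},\dots$. We show that $\Gamma^*$ is precisely this set $\Gamma_{\ge k}$. First, $\Gamma_{\ge k}$ is a candidate. By the induction hypothesis the trail is a CB trail, so levels behave as in CB, and $\chunks(C)$ meets $\Gamma_{\ge k}$ only in $ck_{d_k}$: no literal of $C$ lies at a level above $k$ and depends on a later decision. Hence $|\Gamma_{\ge k}\cap\chunks(C)|=1$, and its chunk level equals $\level(C)$, so the second disjunct of \Cref{def:term:bCands} holds. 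Second, $\Gamma_{\ge k}$ is the minimizer. Literals of weight $-1$ are exactly those at level $\geq k$, and each of them lies in some chunk of $\Gamma_{\ge k}$, since a literal at level $j\ge k$ depends on $d_j$ or on some later decision. So $\Gamma_{\ge k}$ collects all negative weight. Any other candidate that adds a chunk of level $<k$ either picks up positive-weight literals or violates $|\Gamma_i\cap\chunks(C)|=1$. Any candidate that drops a chunk of level $\ge k$ loses negative weight. Ties need care, and I would treat this as the main obstacle. Adding a low-level chunk all of whose literals sit at level $\ge k$ is impossible, because its decision is below $k$. Dropping a chunk with no literals is impossible, because every chunk contains its decision. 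Both checks must also respect $\weight_{\text{\normalfont{CB}}}$ being evaluated relative to the current conflict $C$.

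\textbf{Step 2: analysis and undo agree with CB.} With $\Gamma^*=\Gamma_{\ge k}$, the condition $\Gamma^*\cap\chunks(\ell)\neq\emptyset$ holds exactly for literals at level $\ge k$. I would argue this by induction along the trail: a literal's chunks are the union of the chunks of its reason, and its level is the maximum level of its reason. So \Call{Analyze}{} resolves on the same literals, in the same right-to-left order, as CB's 1UIP analysis at level $k$, and it learns the same clause. \Call{Undo}{} removes exactly the literals at level $\ge k$, which is CB's backtrack to $k-1$. No remaining literal depends on a removed decision, so the levels recomputed by \Call{UpdateLevel}{} are unchanged. Re-enqueueing literals through cross-chunks only moves literals from $\trail$ to $\q$. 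This alters the propagation queue but not $\p$, $\level$, $\reason$ or $\phi$, and the resulting repropagation matches a legal CB repropagation order, as in CB's restoration-point handling.

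\textbf{Step 3: closing the induction.} The learned clause is unit in the same literal under the same reason, with the same level as in CB. The \textsc{UNSAT} test $\chunks(C)=\emptyset$ coincides with CB's conflict at level $0$. This closes the induction. The step I expect to be most delicate is the claim in Step 1 that the argmin is unique and equal to $\Gamma_{\ge k}$, together with the supporting lemma, proved by induction on GB's operations, that in runs of GB$_{\weight_{\text{\normalfont{CB}}}}$ the chunks of every literal are consistent with its decision level, so that $\chunks(\ell)$ contains the chunk at level $\level(\ell)$ and no chunk of higher level.
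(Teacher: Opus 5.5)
Your proposal is correct and follows essentially the same route as the paper. Both arguments identify $\Gamma^*$ as the set of chunks of level $\geq \level(C)$, whose union is exactly the set of literals CB backtracks. Both then go transition by transition, using that chunks and cross-chunks are not part of the state and that $\trail$ and $\q$ differ only in how they split $\p$. Compared with the paper, you spell out what it merely asserts: why the argmin is unique and satisfies the safeguard, and the chunk/level consistency invariant. You also prove only the direction from GB runs to CB runs; that is what the stated definition of simulation requires, whereas the paper's step case additionally claims the converse.
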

\begin{proof}
  The minimal solution to the optimization problem \eqref{eq:chunk-selection} is the set of chunks $\Gamma^*$ whose decision levels are higher than or equal to the decision level of the conflict $C$. That is, $\bigcup_{ck \in \Gamma^*} ck = \{\ell \in \p: \level(\ell) \geq \level(C)\}$ yields the set of literals that CB would backtrack.%
  \footnote{We note that $\Gamma^*$ always satisfies the safeguard mechanism of \Cref{sec:chunk-selection}.}
  We next prove that GB$_{\weight_{\text{\normalfont{CB}}}}$ simulates CB by induction on the sequence of solver states.

  \noindent {\it Base case.} The base case $\langle\p = \emptyset, \forall \ell.\ \level(\ell) = \infty, \forall \ell.\ \reason(\ell) = \blacksquare, \phi=F\rangle$ is trivially the same for both  GB$_{\weight_{\text{\normalfont{CB}}}}$ and CB.

   \noindent {\it Step case.}   For a transition step $T\in\{decide, BCP, analyze, backtrack\}$, we show that from state $S = \langle\p, \level, \reason, \phi\rangle$, if $T_{CB}$ can reach state $S' = \langle\p', \level', \reason', \phi'\rangle$ in CB, then $T_{GB_{\weight_{\text{\normalfont{CB}}}}}$ can also reach $S'$, and vice versa.
  \begin{itemize}
    \item $decide$ is identical in GB$_{\weight_{\text{\normalfont{CB}}}}$ and CB.
    \item $BCP$ in GB$_{\weight_{\text{\normalfont{CB}}}}$ and CB differ only on the handling of watched literals, but they both find the same implications and conflicts by virtue of \Cref{inv:watched-literals}.
    Chunks $\chunks$ and cross-chunks $\crossChunks$ do not affect the states $S$. 

    \item When $\Gamma^*$ is chosen for $analyze$ in GB$_{\weight_{\text{\normalfont{CB}}}}$, it intersects with the conflict in the highest chunk in the conflict. Therefore, $\text{\normalfont{1UIP}}_{\Gamma^*}(C) = \text{\normalfont{1UIP}}_{\{ck : \level_\chunks(ck) = \level(C)\}}(C)$. Thus, $\phi' = \phi \cup \{\text{\normalfont{1UIP}}_{\Gamma^*}(C)\}$ is the same in  GB$_{\weight_{\text{\normalfont{CB}}}}$ and CB.
    \item As previously noted, the set of literals $backtrack$ed by GB$_{\weight_{\text{\normalfont{CB}}}}$ is the same as CB. Therefore, the resulting state after backtracking is the same in  GB$_{\weight_{\text{\normalfont{CB}}}}$ and CB. The solvers may differ in the propagation sets $\trail$ and $\q$, but not in their union.
  \end{itemize}
  The implication justifications $\reason$ are used identically in  GB$_{\weight_{\text{\normalfont{CB}}}}$ and CB: given the same partial assignment $\p$, both methods can find the same justifications for the same literals. Since decision levels $\level$ are defined based on decisions and justifications $\reason$, they also coincide. Therefore, for every transition $T$, if $T$ can reach state $S'$ from $S$ in  GB$_{\weight_{\text{\normalfont{CB}}}}$, then it can also be reached in CB; and vice versa. 
\end{proof}

\paragraph{Simulating NCB.} NCB is similar to CB, except that in NCB, the backjump level is not known until conflict analysis has been performed.
Let $C' = \text{\normalfont{1UIP}}_{\{ck : \level_\chunks(ck) = \level(C)\}}(C)$ be the \normalfont{1UIP} of $C$ with respect to the highest chunk in the conflict, and let $\ell'$ be the unique highest decision level literal in $C'$. The cost function $\weight_{NCB}$ is defined as:
\begin{equation*}
  \weight_{NCB}(\ell) = \begin{cases}
    -1 & \text{if } \level(\ell) > \level\left(C' \setminus \{\ell'\}\right) \\
    1 & \text{otherwise}
  \end{cases}
\end{equation*}
\begin{theorem}[GB simulates NCB]
  \label{lem:ncb-simulation}
  GB$_{\weight_{NCB}}$ simulates NCB.
\end{theorem}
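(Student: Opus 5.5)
We follow the proof of \Cref{lem:cb-simulation} closely: first identify the minimiser $\Gamma^*$ of \eqref{eq:chunk-selection} under $\weight_{NCB}$, then argue by induction on solver states that every transition of GB$_{\weight_{NCB}}$ can be matched by NCB and conversely.

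Let $b = \level(C'\setminus\{\ell'\})$ be the NCB backjump level. In NCB, the literals undone are exactly $\{\ell \in \p : \level(\ell) > b\}$.

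\textbf{Step 1: the minimiser of \eqref{eq:chunk-selection}.} Let $\Gamma_b = \{ck : \level_\chunks(ck) > b\}$. We first check that every literal lies in a chunk of level at most its own decision level. Indeed, $\level(\ell)$ is the maximum chunk level in $\chunks(\ell)$: this holds by induction along the justifications, since decision levels and chunks are both computed from $\reason(\ell)\setminus\{\ell\}$. Consequently $\bigcup_{ck\in\Gamma_b} ck$ is exactly the set of literals of weight $-1$. Every other chunk set either adds a chunk of level at most $b$, whose literals all have weight $1$, or omits a chunk whose decision literal has weight $-1$. Either change can only increase the total weight, so $\Gamma_b$ is the argmin, up to the usual non-determinism.

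\textbf{Step 2: $\Gamma_b$ is a terminating backtrack candidate.} We need $|\Gamma_b\cap\chunks(C)|=1$. At this point we use that CB/NCB maintain the invariant that the conflict has a single decision level $\level(C)$ among the relevant chunks. Since $C'$ is the 1UIP for the top chunk, we have $b < \level(C)$. The difficulty is that $\chunks(C)$ may also contain chunks with levels in $(b,\level(C))$. If so, $|\Gamma_b\cap\chunks(C)|>1$, and the right object is not $\Gamma_b$ intersected with $\chunks(C)$ directly; we must argue through the resolvent. Resolution in \Cref{alg:conflict-analysis} at the top chunk eliminates exactly the literals of $C$ at level $\level(C)$. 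The literals of $C$ with levels in $(b,\level(C))$ do not survive into $C'$, because all of $C'\setminus\{\ell'\}$ has level $\le b$. Hence the chunks of $C$ strictly above $b$ come only from the top chunk, i.e.\ $\chunks(C)\cap\Gamma_b=\{ck_{top}\}$. This settles the cardinality condition. For the safeguard, the level condition $\level_\chunks(\Gamma_b\cap\chunks(C))=\level(C)$ then holds, so $\Gamma_b$ is a terminating backtrack candidate.

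\textbf{Step 3: the induction.} This mirrors \Cref{lem:cb-simulation}. \emph{Decide} and \emph{BCP} are identical, since chunks and cross-chunks do not enter the state and \Cref{lem:gb:inv} guarantees that the same implications and conflicts are found. For \emph{analyze}, $\Gamma_b\cap\chunks(C)=\{ck_{top}\}$ means \Cref{alg:conflict-analysis} resolves exactly on the top-chunk literals and returns $C'$. This is the same clause NCB learns, so $\phi'$ coincides. For \emph{backtrack}, \Call{Undo}{$\Gamma_b$} removes exactly the literals of level $>b$, matching the NCB backjump. Afterwards $\ell'$ is implied by $C'$ at level $b$ in both algorithms, and the remaining $\level,\reason$ agree because they are recomputed from the same justifications.

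The main obstacle is Step 2: showing that $\chunks(C)$ contains no chunk with level in $(b,\level(C))$ other than through the top chunk. If literals of $C$ can belong to such intermediate chunks, the argmin under $\weight_{NCB}$ would not be a valid candidate. In that case we would instead restrict the argmin to valid candidates and show that the result still coincides with $\Gamma_b$ on the literals with $\level(\ell)>b$, which are the only ones NCB unassigns.
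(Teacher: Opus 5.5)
\textbf{There is a genuine gap in Step 2, and it cannot be closed by a better argument.} Your route is the paper's route. The paper's proof only asserts that $\Gamma^*$ is the set of chunks above the backjump level $b$, then defers to \Cref{lem:cb-simulation}. Your Step 2 checks something the paper never checks, and that is where the argument fails. You correctly show that no literal of $C$ has \emph{decision level} in $(b,\level(C))$. You then conclude that no chunk of $\chunks(C)$ other than the top one has \emph{chunk level} above $b$, and this does not follow. A literal $\ell$ at level $\level(C)$ has $\chunks(\ell)=\chunks(\reason(\ell)\setminus\{\ell\})$, which may contain chunks of any lower level. The paper's own running example (\Cref{fig:motivation-conflict}) is a counterexample. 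NCB learns $C_9=\neg w\lor\neg y\lor\neg z$, so $b=2$. Yet $z$ and $h_2$ (level $4$) belong to both $ck_c$ and $ck_d$. Hence $\Gamma_b\cap\chunks(C_8)=\{ck_c,ck_d\}$, and $\Gamma_b$ is not a candidate at all.

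Your fallback cannot work either, and no choice of weight function rescues it. A candidate contains exactly one chunk of $\chunks(C)$, and each decision literal lies only in its own chunk. So whenever $\chunks(C)$ has two chunks above $b$, the decision of the omitted one survives \textsc{Undo}, while NCB unassigns it. In the example, under $\weight_{NCB}$:
\begin{itemize}
\item the candidate $\{ck_c\}$ has weight $-4$ and is terminating, since it learns $C_9\notin F$;
\item it beats $\{ck_d\}$, which has weight $-3$;
\item so GB undoes $c,h_1,z,h_2$ but keeps $d$, whereas NCB's backjump to level $2$ removes $d$.
\end{itemize}
Your induction and the paper's sketch both rely on matching the backtrack transition step by step. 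Under the candidate condition as stated, that matching is simply false.

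The only way to close the gap is to change that condition. For example, replace $|\Gamma_i\cap\chunks(C)|=1$ by the requirement that \textsc{Analyze}$(C,\Gamma_i)$ reaches a clause with exactly one literal in $\bigcup_{ck\in\Gamma_i}ck$. $\Gamma_b$ satisfies this requirement. NCB's analysis only resolves on level-$\level(C)$ literals, so a literal with level in $(b,\level(C))$ can never enter its resolvents: it would survive into $C'$. Hence \textsc{Analyze}$(C,\Gamma_b)$ performs exactly NCB's resolutions and returns $C'$. With that change, your Steps 1 and 3 go through.
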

\begin{proof}
  The set of backtracked chunks $\Gamma^*$ is the set whose chunk levels are higher than the second-highest decision level in the learned clause, which is exactly the set of literals that NCB backtracks. The proof then proceeds as for \Cref{lem:cb-simulation}.
\end{proof}


\section{Advanced Techniques with Graph Backtracking}\label{sec:gb:improve}
We present further improvements and details for implementing a performant GB in \Cref{alg:cdcl}.

\subsection{Chunks and Backtrack Candidates}
\label{sec:gb:implementation}
We represent chunks and cross-chunks as sparse indexed bitsets. They allow for efficient set operations by exploiting bit-level parallelism, while also being memory efficient.

While it is theoretically interesting to consider negative weights (see \Cref{sec:reduction}), $\weight$ often has its image over the positive reals. Therefore, computing the set of backtrack candidates $\Gamma$ for a conflict $C$ is not done using the powerset of chunks $2^{\chunks(\pi)}$, but rather by iterating over the chunks in the conflict only $\chunks(C)$. Indeed, if $\forall \ell.\ \weight(\ell) > 0$, then the optimal solution always has a single chunk, and computing the cost of chunk combinations is wasted. We also add a limit on the number of backtrack candidates to consider, to avoid combinatorial explosions (see \Cref{sec:chunk-merging}). In the remainder of the paper, we will only consider the case of positive weights.

\subsection{Blockers}
\label{sec:blockers}
The two-watched-literal scheme can be improved with the use of blockers~\cite{sorensson2009minisat}. A blocker is a literal $b$ in a clause $C$ stored in the watch list of a watcher $\ell$ such that if $b \in \p$, then $C$ does not need to be inspected when propagating $\neg \ell$. In NCB, \Cref{inv:watched-literals} can be relaxed to include blockers as follows.

\begin{invariant}[Watched literals with\label{inv:blockers} blockers~\cite{DBLP:conf/sat/CoutelierFK24}]
    Consider the trail \(\p = \trail \cdot \q\). For each clause $C \in F$ watched by $c_1, c_2$ and blocked by $b$ in $\wl(c_1)$, we have
    $\neg c_1 \in \trail \Rightarrow [c_2 \in \p \lor b \in \p].$
\end{invariant}

However, the same issues as discussed in \Cref{sec:BCP} arise. The order of propagation
is not necessarily compatible with the backtracking order.
Similarly to \cref{inv:backtrack-compatible-watched-literals},
we therefore strengthen \Cref{inv:blockers} with cross-chunks.
\begin{invariant}[GB blocked watched literals\label{inv:chunks}]
    Consider the trail \(\p = \trail \cdot \q\). For each clause $C \in F$ watched by $c_1, c_2$ and blocked by $b$ in $\wl(c_1)$, we have
    $$\neg c_1 \in \trail \Rightarrow \big[[c_2 \in \p \land \chunks(c_2) \subseteq \crossChunks(c_1)] \lor [b \in \p \land \chunks(b) \subseteq \crossChunks(c_1)]\big].$$
\end{invariant}


\subsection{Chunk Merging}
\label{sec:chunk-merging}
During BCP, we may discover that a clause $C$ is actually a missed implication for a decision $\ell \in ck_\ell$ such that $ck_\ell \notin \chunks(C \setminus \{\ell\})$. In case of conflict, backtracking the chunk $ck_\ell$ alone would not be fruitful since it would be reimplied by $C$ right after and the conflict would be rediscovered once more. Therefore, when backtracking $ck_\ell$, we also should backtrack one of the chunks in $\chunks(C \setminus \{\ell\})$.
We can merge the chunk $ck_\ell$ with $\chunks(C \setminus \{\ell\})$ either during propagation (eagerly), or during conflict repair (lazily).

\emph{Eager chunk merging} (ECM) is simple and similar to the handling of missed lower implications in CB \cite{DBLP:conf/sat/Nadel22}. When a clause $C$, implying a decision literal $\ell$, is discovered, we update the implication graph by reimplying $\ell$ with the clause $C$. The effect of this implication is then rippled through the implication graph, de facto eliminating the chunk $ck_\ell$ and merging it with the chunks in $\chunks(C \setminus \{\ell\})$. The trail is stably reordered to restore the topological sort.
ECM is not always optimal. For instance, let the conflict clause $C'$ spanning over both $ck_\ell$ and some chunk $ck \in \chunks(C \setminus \{\ell\})$. If $ck_\ell$ is merged eagerly, then we would backtrack $ck_\ell$ as well as $ck$. However, if the chunks were not merged, we could backtrack $ck$ alone, which is necessarily lighter than $ck_\ell \cup ck$.

On the other hand, \emph{lazy chunk merging} (LCM) virtually merges chunks during conflict repair.
To do so, during propagation, we remember potential reimplications similarly to~\cite{DBLP:conf/sat/CoutelierFK24}, and ensure that no cycle is created. When creating backtrack candidates $\Gamma$ (\Cref{sec:gb:implementation}), we enhance $\Gamma$ by expanding chunks that would be merged.
This way, the algorithm is more informed for chunk selection, and never needs to materialize the merge. It simply undoes the union of the chunks.
However, if a lot of virtual merges are introduced, the number of options to consider during conflict repair can explode and deteriorate performance.

We note $\Call{Analyze}{C,\Gamma^*}$ is adapted such that when the decision literal of a chunk is analyzed, the missed implication is used for resolution, similarly to Algorithm~4 in~\cite{DBLP:conf/sat/CoutelierFK24}.

\begin{example}
  \label{ex:gb:chunk-merging}
  In the example of \Cref{fig:motivation}, the clause $C_7 = a \lor \neg c \lor \neg d$, can merge $ck_a$ with $ck_c$ and $ck_d$. Using LCM, the backtrack options are $\Gamma = \{\{ck_a, ck_c\}, \{ck_a, ck_d\}, \{ck_c\}, \{ck_d\}\}$.
  To avoid unnecessary cost estimation, we perform subsumption simplifications on the chunks to be merged. In our example, we do not calculate the weights of $\{ck_a, ck_c\}$ and $\{ck_a, ck_d\}$, as they will always be more expensive than $\{ck_c\}$ and $\{ck_d\}$ respectively.
\end{example}


\subsection{Lightest vs. Learning Chunks}
\label{sec:backtracked-vs-learned}
\Cref{sec:chunk-selection} introduced some safeguards to ensure progress.
Namely, if a chunk does not lead to learning a new clause, and is
not at the highest level in the conflict $C$,
then we did not consider it in the terminating backtrack candidates $\Gamma$. However, this constraint can be relaxed.
If for some chunks $\Gamma_i\in \Gamma$, we manage to learn a new clause,
we can safely backtrack any chunk in $\{\Gamma_j \in 2^{\chunks(\p)}: |\Gamma_j \cap \chunks(C)| = 1\}$, even if it did not lead to learning the new clause. In particular, we can backtrack the lightest chunk.

\begin{example}
  Consider \Cref{fig:motivation-conflict}. Undoing $ck_b$ is not allowed because of the safeguard mechanism. However, in \Cref{ex:gb:chunk-merging}, $ck_b$ would be the cheapest choice because of LCM. We learn a new clause $\text{\normalfont{1UIP}}_{ck_d}(C_8) = C_9 = \neg w \lor \neg y \lor \neg z$ and backtrack $ck_b$.
\end{example}

	\section{Empirical Evaluation}
\label{sec:experiments}
We implemented our graph backtracking approach in the experimental solver \napsat~\cite{napsat}. Our solver is based on the CDCL algorithm~\cite{DBLP:journals/tc/Marques-SilvaS99} using the two-watched-literal scheme~\cite{DBLP:conf/dac/MoskewiczMZZM01}, VSIDS branching heuristic~\cite{DBLP:conf/dac/MoskewiczMZZM01} with phase caching \cite{DBLP:conf/sat/EenS03} and clause shrinking~\cite{DBLP:conf/sat/SorenssonB09}. The core implementation consists in roughly 2.7k effective lines of C++ code, covering all options of GB presented in \Cref{sec:details,sec:gb:improve}, as well as standard (N)CB. For a fair comparison, we attempted to maximize code sharing between the different backtracking options;
as such, only 2\% of the code of NCB is not shared with GB, whereas GB represents around 27\% more code than NCB\footnote{Calculated using \texttt{gcc --coverage} with \texttt{lcov}}.
Since our focus is on the comparison of backtracking strategies, we did not implement auxiliary techniques, such as pre- and in-processing, clause minimization, and non-redundant clause deletion (we only purge clauses satisfied at root level). To eliminate engineering differences, we compare the different backtracking strategies in the same solver.
All our computations and results were executed on an AMD EPYC 7502 2$\times$32-core processor running at 2.5~GHz and equipped with 1~TB of RAM.

\paragraph{Weight heuristic.}
We used a simple weight heuristic to minimize propagations. If a literal $\ell$ is propagated, then $\weight(\ell) = 8$; otherwise, $\weight(\ell) = 1$.
The coefficient $8$ was chosen without any fine-tuning. A power of two is however beneficial for stability of floating point arithmetic.
We also add a small penalty to chunks whose decisions are located earlier in the trail. This encourages the solver to behave more like CB when the weights only differ slightly.

\paragraph{Configurations.}
Below, we present the configurations that were evaluated. \napsat{} always uses blockers (\Cref{sec:blockers}).
\begin{itemize}[leftmargin = 2.4\parindent,labelindent=2\parindent]
	\item[NCB:] Non-chronological backtracking as described in \Cref{sec:related} and \cite{DBLP:conf/sat/EenS03}.
	\item[CB:] Chronological backtracking as described in \Cref{sec:related} and \cite{DBLP:conf/sat/NadelR18}.
	\item[LSCB:]  Lazy Strong Chronological Backtracking as described in~\cite{DBLP:conf/sat/CoutelierFK24}.
	\item[GB:] Vanilla graph backtracking as described in \Cref{sec:details}.
	\item[LCM:] Lazy chunk merging as described in \Cref{sec:chunk-merging}.
	\item[ECM:] Eager chunk merging as described in \Cref{sec:chunk-merging}.
	\item[BB:] Backtracking the best chunk as described in \Cref{sec:backtracked-vs-learned}.
\end{itemize}

\paragraph{Benchmarks.}
To illustrate the performance of graph backtracking, we generated 1000 satisfiable instances of 3-coloring problems using the \texttt{cnfgen} tool~\cite{DBLP:conf/sat/LauriaENV17} with arguments \texttt{kcolor 3 gnm 650 1469}. We aimed at finding satisfiable instances sufficiently difficult for our solver, while not too difficult to solve within a reasonable time. Each problem is run with a timeout of 2 hours.
Since the SAT competition problems are tuned to existing SAT techniques, they are not the best fit for our experiments. Graph coloring problems are well-suited for our application because they do not involve too many variables. Further, we could tune the hardness of problems to the capabilities of our solver, which is important to get a good picture of the performance of the different backtracking strategies.

\begin{figure}
	\centering
  \includegraphics[width=\textwidth]{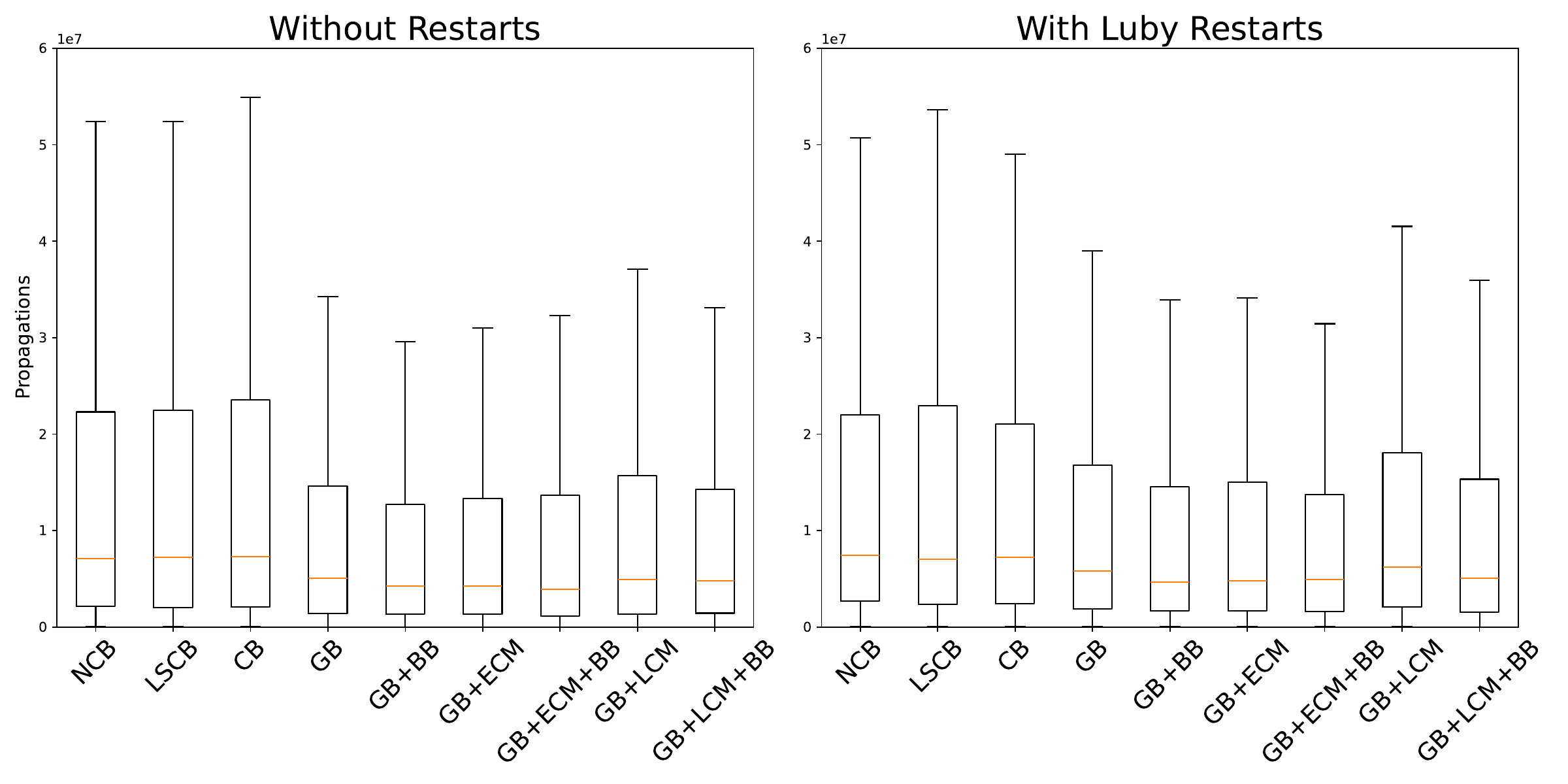}
	\caption{Comparison of the number of propagations for NCB, CB and GB variants.}
	\label{fig:results}
\end{figure}

\begin{table}
  \centering
  \begin{tabular}{l@{\hskip 0.1in}|c|cc|@{\hskip 0.1in}lc@{\hskip 0.1in}|@{\hskip 0.1in}lc}
    \toprule
    Option & Timeout & PAR2 & & Time (s) & & \multicolumn{2}{l}{Propagations $\times10^6$} \\
    \midrule
    NCB         & 16 & 380.7 & -       & 106  $\pm$ 428 & -       & 23.7 $\pm$ 46.4 & -       \\
    LSCB        & 12 & 335.0 & 88.00\% & 100  $\pm$ 388 & 94.27\% & 22.1 $\pm$ 41.0 & 93.07\% \\
    CB          & 11 & 333.1 & 87.50\% & 103  $\pm$ 414 & 97.33\% & 22.9 $\pm$ 42.8 & 96.30\% \\
    \midrule
    GB          & 17 & 378.1 & 99.31\% & 103  $\pm$ 402 & 96.91\% & 14.2 $\pm$ 25.3 & 60.05\% \\
    GB+BB       & 12 & 304.2 & 79.71\% & 91.6 $\pm$ 380 & 85.98\% & 12.6 $\pm$ 23.0 & 53.27\% \\
    GB+ECM      & 11 & 299.7 & 78.72\% & 72.4 $\pm$ 310 & 67.89\% & 12.6 $\pm$ 23.8 & 52.96\% \\
    GB+ECM+BB   & 12 & 303.5 & 79.72\% & 84.9 $\pm$ 366 & 79.66\% & 13.0 $\pm$ 25.0 & 54.64\% \\
    GB+LCM      & 14 & 376.8 & 98.98\% & 124~ $\pm$ 485 & 117.1\% & 15.8 $\pm$ 29.9 & 66.39\% \\
    GB+LCM+BB   & 16 & 387.4 & 101.8\% & 97.0 $\pm$ 332 & 91.05\% & 13.9 $\pm$ 24.2 & 58.71\% \\    \bottomrule
  \end{tabular}
  \caption{Statistics of the experiments. The timeout column indicates the number of instances that timed out for each configuration. PAR2 indicates the penalized average of runtimes, where each timeout is counted as 2 times the timeout value (2 hours). The average times and propagations are computed excluding the 50/1000 instances for which at least one of the configurations timed out.
  Standard deviation is shown after the $\pm$ symbol. Relative values (in percentages) are computed with respect to NCB and displayed on the right of the respective columns.}
  \label{tab:results-condensed}
\end{table}

\paragraph{Result analysis.}
Our experiments are summarized in \Cref{tab:results-condensed} and \Cref{fig:results} with further details in \Cref{sec:app:experiments}. Our approach consistently executes fewer propagations than (N)CB. Our best configuration is GB+ECM as it executes 47\% fewer propagations than NCB and decreases the runtime by around 30\%.

In our experiments, ECM makes the solver more stable, reaching fewer timeouts and reducing the variance of the runtime. This is because, without chunk merging, chains of conflicts can occur and lead to undoing the same number of literals in more backtracking steps, which is very costly. On the other hand, LCM introduces a lot of variance in the cost of computing backtrack candidates and their weights. This is why LCM has the lowest propagations per second, but also the highest variance in runtime. While the results of our experiments suggest that LCM is not very effective, it is still worth investigating it further. Problems with many variables and binary clauses may benefit LCM because the cost of eager merging is proportional to the number of literals that need to be reimplied.

Backtracking the best chunk (BB) has mixed results, improving the performance of GB alone, and with LCM, but not with ECM. However, the penalty witnessed with ECM is not very large. It seems that BB is a good option because it balances the safeguard mechanism described in \Cref{sec:chunk-selection}. This mechanism is meant to prevent a very unlikely scenario where the solver would loop forever. Sacrificing the best chunk for learning a clause goes against the philosophy of GB. BB gets us back to the core idea of GB, while keeping termination.

Unsurprisingly, restarts penalize GB because they clash with the intention of GB to minimize the number of unassignments. The comparison can be found on \Cref{fig:results}. Indeed, restarts work well when propagations are cheap, which is less the case in GB. Reordering the decisions is less important in GB since the solver can backtrack any of the decisions involved in the conflict, and not only the most recent one.

The results also highlight the trade-off between time and number of propagations. While GB executes significantly fewer propagations, each of them is slower than in (N)CB. This is because of the chunk and cross-chunk maintenance, and the more complex backtracking procedure, which requires going through the entire trail to find the literals that need to be propagated again. GB shines particularly when the weight of chunks varies significantly.


\section{Conclusion and Future Work}
We introduce a graph backtracking (GB), yielding a more surgical backtracking approach for CDCL solvers. 
GB generalizes the backtracking process of (N)CB and adapts to user preferences over literal unassignments. Our GB approach is sound and terminating. 
We implemented GB in the experimental solver \napsat and showed that it performs better on graph coloring benchmarks. Particularly, GB backtracks fewer literals in single-shot SMT conflicts, compared to (N)CB.

The AVATAR framework~\cite{DBLP:conf/cav/Voronkov14} is a promising application for graph backtracking as it
does not require a given order of literals in the trail, only sets of assigned literals.
The integration of \napsat in the first-order prover \vampire{}~\cite{DBLP:conf/cav/BartekBCHHHKRRRSSV25} is ongoing. Scaling the algorithms to large number of decisions, and therefore chunks is still an open problem.
Integrating GB in SMT solving is another future work direction.
SMT solvers typically follow a more rigid assignment structure for their theory solvers,
which does not fit well with graph backtracking. Relaxing this stack assumption to benefit from chronological and graph backtracking is future work.

	\bibliographystyle{plainurl}
	\bibliography{refs}

	\newpage
	\appendix

\section{Soundness, Completeness and Termination}
\label{app:soundness}
We prove that the graph backtracking approach of \Cref{alg:cdcl}  is sound, complete, and terminating.

\begin{dupllemma}{\ref{lem:gb:inv}}
  The graph backtracking CDCL-based approach of \Cref{alg:cdcl} preserves \Cref{inv:backtrack-compatible-watched-literals}.
\end{dupllemma}
\begin{proof}
  The proof is similar to the proof of soundness of lazy strong chronological back-tracking~\cite{DBLP:conf/sat/CoutelierFK24}. We show that \Cref{inv:backtrack-compatible-watched-literals} is preserved by each of the procedures of GB. The detailed steps for BCP are included in the source code of \napsat.

  \paragraph{BCP.}
  The BCP algorithm transfers, one by one, literals from the waiting queue $\q$ to the propagated set $\trail$.
  $\Call{BCP}$ preserves \Cref{inv:backtrack-compatible-watched-literals} by induction.
  At the start, $\p = \trail = \q = \emptyset$ and the invariant is trivially satisfied.
  Let us assume the current partial assignment $\p = \trail \cdot \q$ satisfies \Cref{inv:backtrack-compatible-watched-literals} before executing $\Call{BCP}$.
  A literal $\ell \in \q$ can be added to the propagated set $\trail$ if we first ensure that for each clause $C \in F$ watched by $c_1, c_2$,
  \begin{equation}
    \neg c_1 \in (\trail \cdot \ell) \Rightarrow [c_2 \in \p \land \chunks(c_2) \subseteq \crossChunks(c_1)]
    \label{eq:invariant-step}
  \end{equation}

  All clauses that are not watched by $\neg \ell$ satisfy \Cref{eq:invariant-step} by the inductive hypothesis, and do not need to be inspected. We therefore only consider clauses watched by $\neg \ell$.
  When we attempt to add $\ell$ to $\trail$, for each clause $C$, one of following can occur:
  \begin{itemize}
    \item If $c_2 \in \p \land \chunks(c_2) \subseteq \crossChunks(c_1)$, then nothing needs to be done.
    \item If there exists a replacement literal $r \in C \setminus \{c_2\}$ such that $\neg r \notin \p$, we watch $r$ instead of $c_1$ and have $\neg r \notin (\trail \cdot \ell)$ since $\ell \in \q \subseteq \p$.
    \item If $\neg c_2 \in \p$ and no good replacement exists, then $\Call{BCP}$ returns the conflict clause $C$ and $\ell$ is not added to $\trail$.
    \item If $c_2$ is unassigned and no good replacement exists, then $c_2$ is added to $\q$ such that $\chunks(c_2) = \chunks(C)$.
    We then update $\crossChunks(c_1)$ to include $\chunks(c_2)$.
    \item If $c_2 \in \p$ but $\chunks(c_2) \nsubseteq \crossChunks(c_1)$, then the algorithm forces $\chunks(c_2) \subseteq \crossChunks(c_1)$ by updating $\crossChunks(c_1)$.
  \end{itemize}

  The cases discussed above cover all possible scenarios. For each, either $\ell$ is not added to $\trail$ because of a conflict, or \Cref{eq:invariant-step} is satisfied.

  \paragraph{Chunk Selection \& Conflict Analysis.}
  Chunk Selection and Conflict analysis do not change the state of the partial assignment, nor the clauses. Therefore, they trivially preserve the invariant.

  \paragraph{Backtracking.}
  Backtracking may change the partial assignment by removing literals. However, if a clause watched by $c_1, c_2$ was such that $\neg c_1 \in \trail \land c_2 \in \p$, then either $c_2$ is left untouched, or $\neg c_1$ is removed from $\trail$ (either removed from $\p$ or placed in $\q$). Therefore, after backtracking, it is not possible that $\neg c_1 \in \trail \Rightarrow c_2 \in \p$ is violated.
  The chunk sets of literals remain unchanged, and therefore the invariant is preserved.

  \paragraph{CDCL.}
  Only two aspects of CDCL have not been examined so far. First, the decisions made during the search. However, since this part of the algorithm only adds literals to the waiting queue $\q$, the invariant is trivially preserved.
  Second, when the learned clause is added to the clause set $F$, the invariant is restored because we know that the new clause has exactly one unassigned literal (by design of the UIP), and therefore, pushing the new implication and updating the cross-chunks of the other literal is sufficient to maintain the invariant.
\end{proof}

Strong with \Cref{lem:gb:inv}, we can now show that the graph backtracking algorithm find all implications and cannot miss a conflict.

\begin{dupllemma}{\ref{lemma:gb:conflict}}
  In \Cref{alg:cdcl}, no clause is conflicting or unit with the propagated set $\trail$ and not satisfied by $\p$.
\end{dupllemma}
\begin{proof}
  By contradiction, if a clause $C$ was falsified or unit by the propagated set $\trail$, then at least one of its watchers would be falsified in $\trail$, and the other cannot be satisfied. This contradicts \Cref{lem:gb:inv}.
\end{proof}

\begin{dupltheorem}{\ref{thm:gb:soundness}.1}
CDCL-based SAT solving using GB in \Cref{alg:cdcl} is sound.
\end{dupltheorem}
\begin{proof}
  With \Cref{lemma:gb:conflict}, a satisfiying assignment cannot be conflicting with any clause of the formula. Therefore, if GB CDCL returns SAT, the assignment is indeed satisfying.
  If GB CDCL returns UNSAT, then the empty clause was derived by resolution from clauses of the formula. Therefore, the formula is indeed unsatisfiable.
\end{proof}

\begin{dupltheorem}{\ref{thm:gb:completeness}.2}
  CDCL-based SAT solving using GB in \Cref{alg:cdcl} is complete.
\end{dupltheorem}
\begin{proof}
  If the formula is satisfiable, then there exists a satisfying assignment $\p$. Since the algorithm only learns clauses that are entailed by the formula, $\p$ remains a satisfying assignment of the formula throughout the execution of the algorithm. Therefore, the algorithm cannot return UNSAT.
  If the formula is unsatisfiable, then by completeness of resolution, the empty clause can be derived by resolution from clauses of the formula. Therefore, the algorithm eventually returns UNSAT.
\end{proof}

\begin{dupltheorem}{\ref{thm:gb:termination}.3}
CDCL-based SAT solving using GB in \Cref{alg:cdcl} terminates.
\end{dupltheorem}
\begin{proof}
  The termination of each subalgorithm of GB CDCL follows from the finite set of variables and clauses. It does not need further explanation as it is similar to other CDCL variants.

  Progress is made by two means: (i)~Assigning literals that do not belong to a chunk (root level in NCB). When a literal $\ell$ is assigned and does not belong to any chunk, it means it is a direct implication of a subset of the formula $F$. Therefore, it cannot be backtracked. Since there are finitely many variables, assigning a literal $\ell$ such that $\chunks(\ell) = \emptyset$ is progress towards termination.
  (ii)~Learning a new entailed clause makes progress since there are finitely many distinct clauses entailed by the formula $F$.

  We show that always eventually, if the algorithm is not finished, it will make progress.
  When a conflict $C$ is discovered, we perform conflict analysis and obtain the entailed clause $C'$ one of two things happens:
  \begin{enumerate}
    \item $C' \notin F$ and we make progress by virtue of (ii)
    \item $C' \in F$ and the learned clause is redundant. Let $\chunks(C)$ be the set of chunks on which $C'$ is conflicting.
    \begin{enumerate}
      \item If $|\chunks(C')| = 1$, then, ignoring root level literals, the learned clause must be of length one as well because of the UIP algorithm. After backtracking, we imply a literal $\ell$ such that $\chunks(\ell) = \emptyset$ and make progress by virtue of (i).
      \item If $|\chunks(C')| > 1$, then the algorithm undoes the chunk $ck \in \chunks(C')$ whose decision $d$ is at the highest decision level among $\chunks(C')$ (the one that was decided last). This is enforced by the safeguard of \Cref{sec:chunk-selection}. Let the literal $\ell \in ck$ be the UIP. We call \emph{hot potatoes} the literals in a chunk that were flipped because of a conflict. As long as a potato lives, the conflict clause that created it cannot happen again (since the clause is satisfied). Hot potatoes might get undone, but only if a lower one is added. Hot potatoes can only sink in terms of decision level. Therefore, it cannot be that the same conflicts are infinitely rediscovered without making progress.
    \end{enumerate}
  \end{enumerate}
\end{proof}

	\section{Conflict Search Strategy}
\label{sec:conflict-strategy}
When a conflict is detected, a few options are possible. The simplest,
and most commonly used, is to interrupt BCP and trigger conflict
repair immediately.
However, this is not necessarily always the best option.
Indeed, we sometimes want to gather all conflicts that can be
discovered at once,
especially when the trail does not have a monotonic structure like in CB or GB~\cite{DBLP:conf/sat/Nadel22,briefs-fleury-2025}.

We differentiate between three strategies: (i) immediate, (ii)
partial, and (iii) exhaustive conflict repair. (i) Immediate conflict
repair is the simplest strategy.
As soon as a conflict is detected, BCP is interrupted and conflict
analysis is triggered.
(ii) Partial conflict repair propagates all literals in
the first conflict clause before triggering conflict analysis.
Finally, (iii) exhaustive conflict repair continues BCP until the queue $\q$ is empty.

State-of-the-art solvers like Kissat~\cite{BiereFleuryPolitt-SAT-Competition-2023-solvers} and CaDiCaL~\cite{DBLP:conf/cav/BiereFFFFP24} do immediate conflict repair, with Intel SAT~\cite{DBLP:conf/sat/Nadel22} doing exhaustive.

\paragraph{Immediate Conflict Repair.}
In light of the termination criterion presented in \Cref{sec:chunk-selection}, it is important to learn at least one new clause for each conflict repair that does not occur at the highest chunk. However, a hidden problem occurring in both GB and CB is the possibility to learn an existing, or subsumed clause. This can happen if multiple conflicts occur simultaneously. Using immediate conflict repair, we analyze the first conflict $C$ to obtain $D$. If $\exists C' \in F$ such that $C' \subseteq D$, we would not make progress. A brutish way to verify whether the clause $D$ is new, is to traverse the watch lists of each literal of $D$. If the clause is subsumed, another chunk must be selected.

\paragraph{Partial and Exhaustive Conflict Repair.}
While no known attempt was successful in finding value in exhaustive conflict repair in NCB, it has been experimented on in the context of CB~\cite{DBLP:conf/sat/Nadel22,briefs-fleury-2025}. In applications where the procedure calling the SAT solver is orders of magnitude more expensive than the SAT solver itself, it might be worth gathering more conflicts to select the lightest set of literals with more accuracy.
Collecting more conflicts is transparent to the user and allows better choices of chunk to backtrack.

By virtue of the topological order of propagations, partial conflict repair ensures that if the first conflict's learned clause $D$ is subsumed by another existing clause $C'$, then $C'$ is conflicting and will also be discovered. Exhaustive conflict repair ensures that if any learned clause is subsumed, the subsuming clause will part of the conflicting clause set.

Partial and exhaustive conflict repair are non-trivial to implement. Gathering all the conflicts is not difficult, but handling them simultaneously is. We only present the exhaustive conflict repair procedure in broad strokes, with details in the code base. Partial conflict repair procedure is similar, but with weaker guarantees on the novelty of learned clauses.

Considering a set $\kappa$ of conflicts discovered during BCP, the first step is to find the set $\Gamma = \{\Gamma_i | \forall C \in \kappa. \Gamma_i \cap \chunks(C) \neq \emptyset\}$ of all set of chunks $\Gamma_i$ that can be backtracked to repair all conflicts at once. We use a greedy algorithm enhanced with subsumption to find a small set of solutions.
Then, we expand each $\Gamma_i$ with the lazy merges (\Cref{sec:chunk-merging}) and perform another subsumption simplification.

Note that this definition of backtrack candidates after subsumption ensures that at least one of the conflicts can either learn a UIP, or is already a UIP itself. Therefore, we are guaranteed to be able to imply a flipped literal after backtracking, which is necessary to ensure progress.

We filter all sets of chunks $\Gamma_i$ that cannot lead to new clauses and do not contain the highest chunk level out of $\Gamma$. We then select $\Gamma^*$, the lightest chunks in $\Gamma$ with respect to $\weight$.

Finally, we attempt to learn new clauses for each conflict $C \in \kappa$ using the selected $\Gamma^*$. If all learned clauses are subsumed by existing clauses, we select another $\Gamma_i$ and try again. We stop when we learn a new clause, or when we analyze a top level chunk.

	\section{Additional Empirical Analysis}
\label{sec:app:experiments}

In this section, we provide additional representation and comments of our experimental results. The option acronyms and experimental setup were defined and described in \Cref{sec:experiments}.

Additionally, we evaluated the different conflict research strategies as descived in \Cref{sec:conflict-strategy} and the impact of restarts on GB. These results are available in \Cref{sec:app:experiments}.

However, due to the large number of configurations and cost of exhaustive conflict search, the following results were obtained on smaller problems than the ones presented in \Cref{sec:experiments}. We generated 1000 satisfiable instances of 3-coloring problems using the \texttt{cnfgen} tool~\cite{DBLP:conf/sat/LauriaENV17} with arguments \texttt{kcolor 3 gnm 400 920}.

\subsection{Experimental Setup}

\paragraph{Target metric.}
The motivation of GB is to give some control to choose which literals should be unassigned upon conflict discovery. To illustrate the performance of GB, we use an approximation of the work performed by a user solving incremental SAT instances. We measure the number of \emph{synchronizations} (syncs) between the solver and the user. That is, before each decision of the SAT solver, we assume a user synchronizing its internal state with the SAT solver (similar to user propagators~\cite{DBLP:journals/jair/FazekasNPKSB24} and how \verit~\cite{DBLP:conf/cade/BoutonODF09} works).
If a variable has changed polarity in the assignment compared to the last synchronization, we count one \texttt{sync}.
To minimize the number of synchronizations, the decision heuristic assigns literals to the same polarity as the last synchronization.

\paragraph{Decision polarity.}
In standard mode, \napsat{} uses phase cashing to assign decision literals to the same polarity as its last assignment. To minimize the changes between synchronizations, we replace phase caching with a more aggressive approach that assigns decision literals to the same polarity as the last synchronization. This way, the solver prefers to keep literals synchronized. In addition, we also add a small penalty to chunks whose decisions are located earlier in the trail. The encourages the solver to behave more like CB when it does not matter. Further, experiments show that analyzing conflicts on earlier chunks generally leads to larger learned clauses.

\paragraph{Additional options.}
We also evaluated the impact of the conflict strategy (immediate, partial and exhaustive conflict repair) and the impact of restarts on GB. In addition to the options described in \Cref{sec:experiments}, we also evaluated the following configurations:
\begin{itemize}[leftmargin = 2\parindent,labelindent=2\parindent]
  \item[PCR] Partial conflict repair.
  \item[ECR] Exhaustive conflict repair.
\end{itemize}
In terms of synchronizations, LSCB and CB are very similar, we therefore did not run LSCB for these experiments.
No timeout was used for these experiments.

\subsection{Results and Analysis}

\paragraph{Raw statistics.} \Cref{tab:results-raw} shows the average time, number of synchronizations for each configuration. Graph backtracking (GB), no matter its variant, consistently executes fewer synchronizations than NCB and CB, on average. Our best configuration, GB+BB, executes around 37\% fewer synchronizations than NCB, with a reasonable overhead in execution time. ECR can bring this number to 44\%, but at a significant cost in terms of execution time and number of propagations.

\begin{table}
  \centering
  \begin{tabular}{l@{\hskip 0.1in}|@{\hskip 0.1in}lr@{\hskip 0.1in}|@{\hskip 0.1in}lr}
    \toprule
    Option & Time (s)  & Time relative & Sync $\times10^3$  & Sync relative \\
    \midrule
    NCB           & 1.43 $\pm$ 4.66 & - ~~~~~~~~~~    & 85.1 $\pm$ 118 & -~~~~~~~~~~ \\
    CB            & 1.26 $\pm$ 3.26 & 88.09\% (88.09\%) & 74.0 $\pm$ 98.1 & 86.87\% (86.87\%) \\
    GB            & 2.24 $\pm$ 6.35 & 156.2\% (156.2\%) & 49.4 $\pm$ 75.6 & 57.99\% (57.99\%) \\
    GB+BB         & 3.16 $\pm$ 20.3 & 220.9\% (220.9\%) & 53.4 $\pm$ 106 & 62.77\% (62.77\%) \\
    GB+ECM        & 2.27 $\pm$ 10.8 & 158.6\% (158.6\%) & 55.7 $\pm$ 96.4 & 65.38\% (65.38\%) \\
    GB+ECM+BB     & 1.82 $\pm$ 5.20 & 126.8\% (126.8\%) & 52.0 $\pm$ 79.8 & 61.08\% (61.08\%) \\
    GB+LCM        & 2.54 $\pm$ 6.61 & 177.1\% (177.1\%) & 54.4 $\pm$ 82.6 & 63.96\% (63.96\%) \\
    GB+LCM+BB     & 3.09 $\pm$ 12.6 & 215.7\% (215.7\%) & 56.2 $\pm$ 99.9 & 66.00\% (66.00\%) \\
    \midrule
    NCB+PCR       & 1.71 $\pm$ 4.75 & - ~~~~  (119.1\%) & 83.1 $\pm$ 110  & - ~~~~  (97.59\%) \\
    CB+PCR        & 1.65 $\pm$ 4.77 & 96.67\% (115.2\%) & 73.0 $\pm$ 97.0 & 87.88\% (85.76\%) \\
    GB+PCR        & 3.17 $\pm$ 11.0 & 185.8\% (221.3\%) & 49.7 $\pm$ 79.4 & 59.82\% (58.38\%) \\
    GB+BB+PCR     & 3.41 $\pm$ 11.6 & 199.6\% (237.8\%) & 51.6 $\pm$ 83.0 & 62.12\% (60.62\%) \\
    GB+ECM+PCR    & 2.80 $\pm$ 13.3 & 164.3\% (195.7\%) & 52.4 $\pm$ 88.4 & 63.10\% (61.58\%) \\
    GB+ECM+BB+PCR & 2.49 $\pm$ 8.50 & 146.0\% (174.0\%) & 53.1 $\pm$ 82.8 & 63.95\% (62.41\%) \\
    GB+LCM+PCR    & 3.40 $\pm$ 12.3 & 199.2\% (237.3\%) & 52.7 $\pm$ 82.1 & 63.38\% (61.86\%) \\
    GB+LCM+BB+PCR & 5.73 $\pm$ 41.2 & 335.8\% (400.0\%) & 58.3 $\pm$ 113  & 70.16\% (68.47\%) \\
    \midrule
    NCB+ECR       & 27.3 $\pm$ 89.3 & - ~~~~  (1909\%) & 78.6 $\pm$ 106  & - ~~~~  (92.37\%) \\
    CB+ECR        & 27.3 $\pm$ 102  & 99.73\% (1903\%) & 66.0 $\pm$ 89.4 & 83.95\% (77.54\%) \\
    GB+ECR        & 74.8 $\pm$ 589  & 273.5\% (5221\%) & 44.3 $\pm$ 77.6 & 56.38\% (52.07\%) \\
    GB+BB+ECR     & 72.3 $\pm$ 853  & 264.3\% (5046\%) & 43.2 $\pm$ 74.8 & 54.97\% (50.77\%) \\
    GB+ECM+ECR    & 51.5 $\pm$ 655  & 188.2\% (3593\%) & 48.4 $\pm$ 87.5 & 61.52\% (56.82\%) \\
    GB+ECM+BB+ECR & 47.5 $\pm$ 263  & 173.7\% (3316\%) & 49.1 $\pm$ 82.6 & 62.47\% (57.70\%) \\
    GB+LCM+ECR    & 68.8 $\pm$ 385  & 251.7\% (4804\%) & 50.1 $\pm$ 84.3 & 63.66\% (58.80\%) \\
    GB+LCM+BB+ECR & 50.7 $\pm$ 172  & 185.4\% (3539\%) & 47.2 $\pm$ 70.6 & 59.97\% (55.39\%) \\
    \bottomrule
  \end{tabular}
  \caption{Average time and number of synchronizations for each configuration with \textbf{restarts disabled}.
  Standard deviation is shown after the $\pm$ symbol. The relative values are computed with respect to the NCB configuration with the same conflict research method and restart policy. In parentheses, we also show the relative values with respect to the NCB with immediate conflict repair and no restart.}
  \label{tab:results-raw}
\end{table}
\begin{table}
  \centering
  \begin{tabular}{l@{\hskip 0.1in}|@{\hskip 0.1in}lr@{\hskip 0.1in}|@{\hskip 0.1in}lr}
    \toprule
    Option & Time (s)  & Time relative & Sync $\times10^3$  & Sync relative \\
    \midrule
    NCB           & 1.82 $\pm$ 7.11 & - ~~~~  (127.2\%) & 85.1 $\pm$ 135  & - ~~~~  (99.98\%) \\
    CB            & 1.68 $\pm$ 5.48 & 92.28\% (117.4\%) & 83.0 $\pm$ 122  & 97.56\% (97.55\%) \\
    GB            & 3.53 $\pm$ 12.2 & 193.7\% (246.4\%) & 63.0 $\pm$ 99.9 & 73.97\% (73.96\%) \\
    GB+BB         & 3.63 $\pm$ 20.3 & 199.5\% (253.8\%) & 63.7 $\pm$ 105  & 74.81\% (74.81\%) \\
    GB+ECM        & 2.88 $\pm$ 9.80 & 158.0\% (201.1\%) & 66.4 $\pm$ 110  & 78.00\% (77.99\%) \\
    GB+ECM+BB     & 3.41 $\pm$ 20.2 & 187.2\% (238.2\%) & 66.7 $\pm$ 125  & 78.39\% (78.39\%) \\
    GB+LCM        & 4.66 $\pm$ 26.2 & 256.0\% (325.7\%) & 66.0 $\pm$ 126  & 77.58\% (77.57\%) \\
    GB+LCM+BB     & 4.23 $\pm$ 16.9 & 231.9\% (295.0\%) & 66.6 $\pm$ 116  & 78.28\% (78.26\%) \\
    \midrule
    NCB+PCR       & 2.56 $\pm$ 16.5 & - ~~~~  (179.0\%) & 82.3 $\pm$ 130 & - ~~~~  (96.70\%) \\
    CB+PCR        & 1.96 $\pm$ 6.86 & 76.55\% (137.0\%) & 75.4 $\pm$ 107 & 91.54\% (88.52\%) \\
    GB+PCR        & 8.15 $\pm$ 109  & 317.8\% (568.9\%) & 64.3 $\pm$ 117 & 78.15\% (75.57\%) \\
    GB+BB+PCR     & 5.61 $\pm$ 33.3 & 218.7\% (391.4\%) & 62.0 $\pm$ 103 & 75.28\% (72.79\%) \\
    GB+ECM+PCR    & 4.17 $\pm$ 21.4 & 162.6\% (291.1\%) & 65.6 $\pm$ 110 & 79.64\% (77.02\%) \\
    GB+ECM+BB+PCR & 4.85 $\pm$ 35.3 & 189.3\% (338.8\%) & 65.5 $\pm$ 116 & 79.54\% (76.92\%) \\
    GB+LCM+PCR    & 8.18 $\pm$ 86.5 & 319.0\% (570.9\%) & 67.4 $\pm$ 123 & 81.86\% (79.16\%) \\
    GB+LCM+BB+PCR & 5.96 $\pm$ 31.5 & 232.6\% (416.3\%) & 66.0 $\pm$ 110 & 80.14\% (77.50\%) \\
    \midrule
    NCB+ECR       & 38.7 $\pm$ 162  & - ~~~~  (2702\%)  & 78.4 $\pm$ 118  & - ~~~~  (92.02\%) \\
    CB+ECR        & 30.6 $\pm$ 101  & 78.93\% (2132\%)  & 68.9 $\pm$ 93.1 & 87.99\% (80.97\%) \\
    GB+ECR        & 90.8 $\pm$ 346  & 234.5\% (6336\%)  & 51.6 $\pm$ 75.9 & 65.87\% (60.62\%) \\
    GB+BB+ECR     & 102  $\pm$ 463  & 263.7\% (7126\%)  & 51.3 $\pm$ 81.3 & 65.50\% (60.28\%) \\
    GB+ECM+ECR    & 94.4 $\pm$ 1329 & 243.8\% (6588\%)  & 58.5 $\pm$ 118  & 74.71\% (68.75\%) \\
    GB+ECM+BB+ECR & 74.4 $\pm$ 651  & 192.3\% (5196\%)  & 59.5 $\pm$ 104  & 75.98\% (69.92\%) \\
    GB+LCM+ECR    & 181  $\pm$ 1619 & 468.8\% (12669\%) & 58.2 $\pm$ 111  & 74.24\% (68.32\%) \\
    GB+LCM+BB+ECR & 129  $\pm$ 684  & 333.7\% (9017\%)  & 54.6 $\pm$ 89.8 & 69.62\% (64.07\%) \\
    \bottomrule
  \end{tabular}

  \caption{Average time and number of synchronizations for each configuration with \textbf{restarts enabled}.
  Standard deviation is shown after the $\pm$ symbol. The relative values are computed with respect to the NCB configuration with the same conflict research method and restart policy. In parentheses, we also show the relative values with respect to the NCB with immediate conflict repair and no restart.}
  \label{tab:results-raw-restart}
\end{table}

\begin{table}
  \centering
  \begin{tabular}{l@{\hskip 0.1in}|c|cc|@{\hskip 0.1in}lc@{\hskip 0.1in}|@{\hskip 0.1in}lc}
    \toprule
    Option & Timeout & PAR2 & & Time (s) & & \multicolumn{2}{l}{Propagations $\times10^6$} \\
    \midrule
    NCB       & 6  & 219 & - ~~~~  & 66.6 $\pm$ 262 & 100.0\% & 20.5 $\pm$ 36.1 & - ~~~~  \\
    LSCB      & 10 & 244 & 111.5\% & 57.5 $\pm$ 215 & 86.43\% & 19.5 $\pm$ 32.6 & 95.06\% \\
    CB        & 4  & 182 & 83.45\% & 66.3 $\pm$ 299 & 99.66\% & 20.6 $\pm$ 37.7 & 100.4\% \\
    \midrule
    GB        & 13 & 331 & 151.2\% & 96.7 $\pm$ 347 & 145.3\% & 15.0 $\pm$ 25.0 & 72.96\% \\
    GB+BB     & 11 & 303 & 138.6\% & 83.6 $\pm$ 349 & 125.5\% & 12.8 $\pm$ 22.4 & 62.55\% \\
    GB+ECM    & 17 & 390 & 178.2\% & 91.3 $\pm$ 420 & 137.2\% & 14.6 $\pm$ 27.7 & 71.13\% \\
    GB+ECM+BB & 15 & 350 & 159.7\% & 91.2 $\pm$ 419 & 137.0\% & 14.0 $\pm$ 27.4 & 68.11\% \\
    GB+LCM    & 16 & 423 & 193.2\% & 137~ $\pm$ 541 & 206.8\% & 17.5 $\pm$ 31.3 & 85.05\% \\
    GB+LCM+BB & 14 & 349 & 159.3\% & 98.5 $\pm$ 430 & 147.9\% & 14.2 $\pm$ 26.0 & 69.09\% \\
    \bottomrule
  \end{tabular}
  \caption{Statistics of the benchmarks described in \Cref{sec:experiments} with \textbf{restarts enabled}. The timeout column indicates the number of instances that timed out for each configuration. PAR2 indicates the penalized average of runtimes, where each timeout is counted as 2 times the timeout value (2 hours). The average times and propagations are computed excluding the 50/1000 instances for which at least one of the configurations timed out.
  Standard deviation is shown after the $\pm$ symbol. Relative values (in percentages) are computed with respect to NCB and displayed on the right of the respective columns.}
  \label{tab:results-raw-restarts-650}
\end{table}

\paragraph{Cactus plots.}
\Cref{fig:results-cactus-650,fig:results-cactus-400} show cactus plots of the number of problems solved within a given time and number of synchronizations and conflicts. \Cref{fig:results-cactus-650} displays the results for the experiments presented in \Cref{sec:experiments}, while \Cref{fig:results-cactus-400} shows the results for the experiments presented in this section.

We filtered out the redundant options for readability

\begin{figure}
  \centering
  \begin{subfigure}{0.495\textwidth}
    \includegraphics[width=\textwidth]{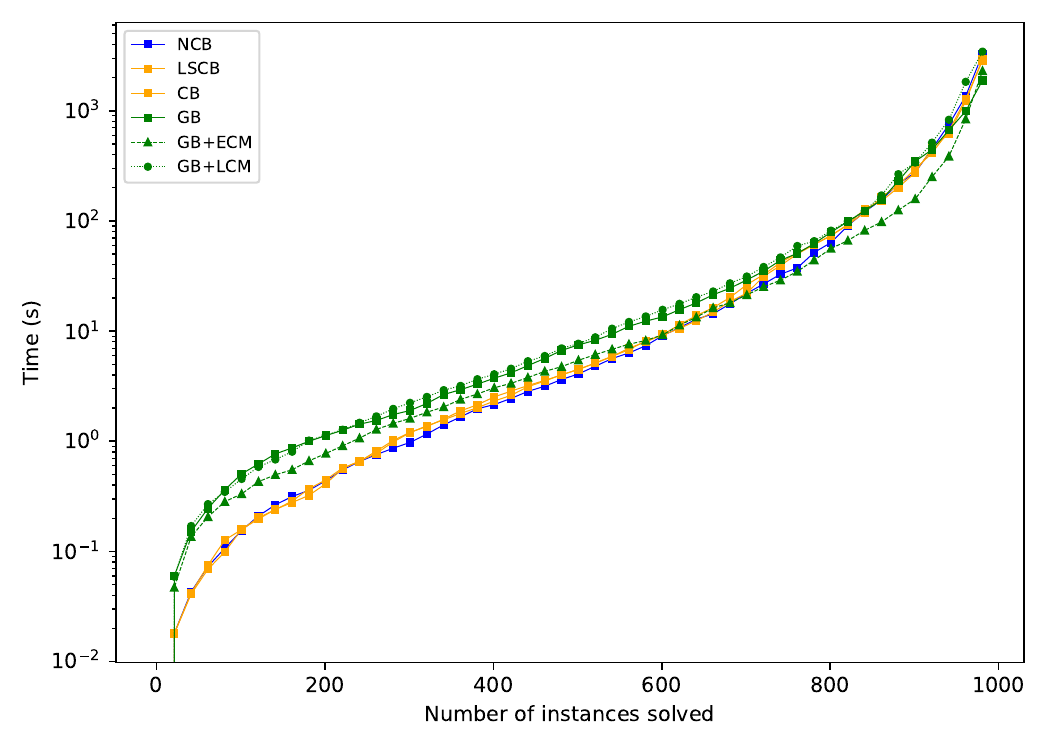}
    \caption{Number solved over time.}
    \label{fig:results-cactus-time-650}
  \end{subfigure}
  \hfill
  \begin{subfigure}{0.495\textwidth}
    \includegraphics[width=\textwidth]{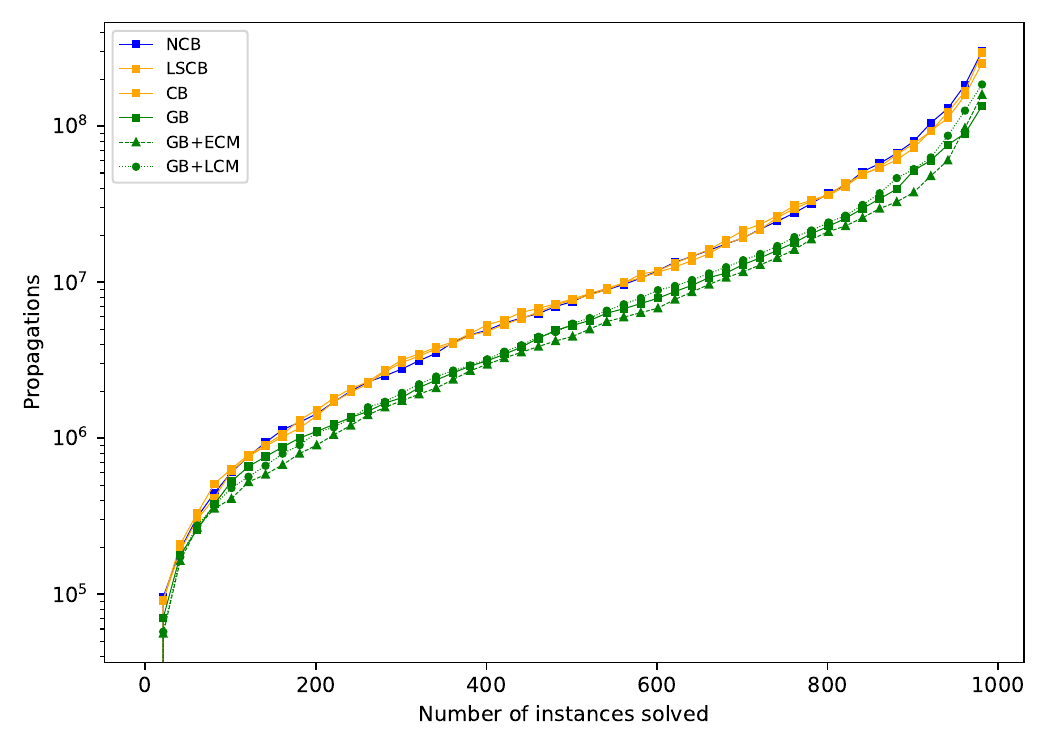}
    \caption{Number solved over number of propagations.}
    \label{fig:results-cactus-propagation-650}
  \end{subfigure}
  \begin{subfigure}{0.495\textwidth}
    \includegraphics[width=\textwidth]{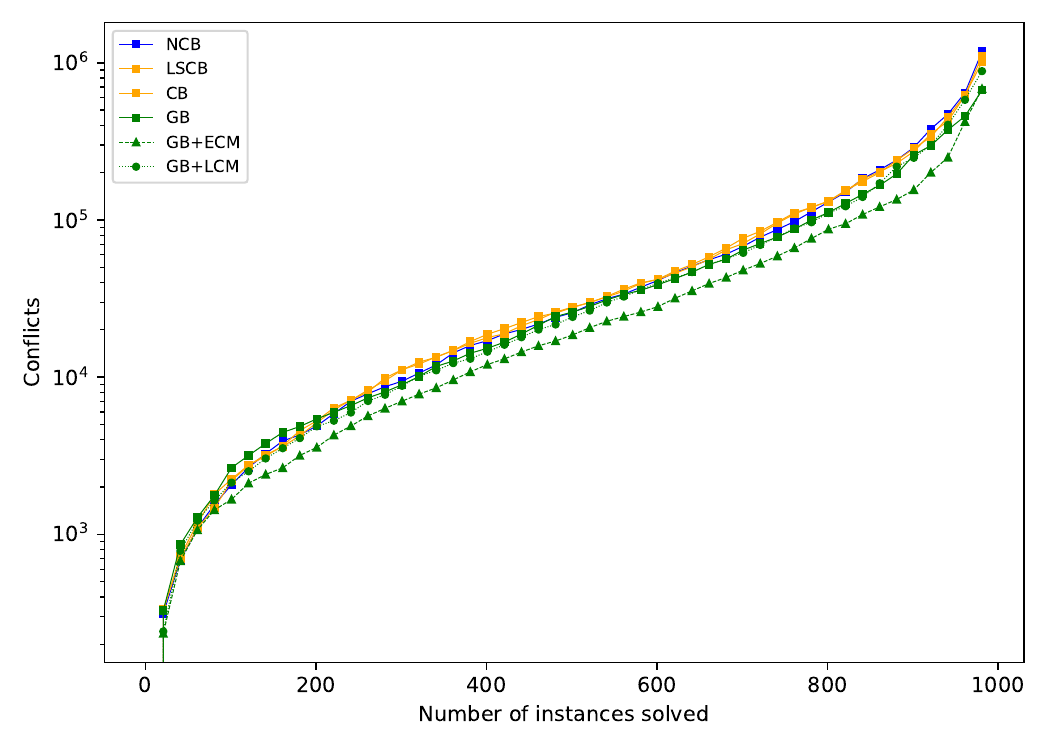}
    \caption{Number solved over number of conflicts.}
    \label{fig:results-cactus-conflicts-650}
  \end{subfigure}
  \caption{Cactus plot of the number of problems solved for NCB, CB, and GB for the experiments of \Cref{sec:experiments}. The lower the better.}
  \label{fig:results-cactus-650}
\end{figure}

\begin{figure}
  \centering
  \begin{subfigure}{0.495\textwidth}
    \includegraphics[width=\textwidth]{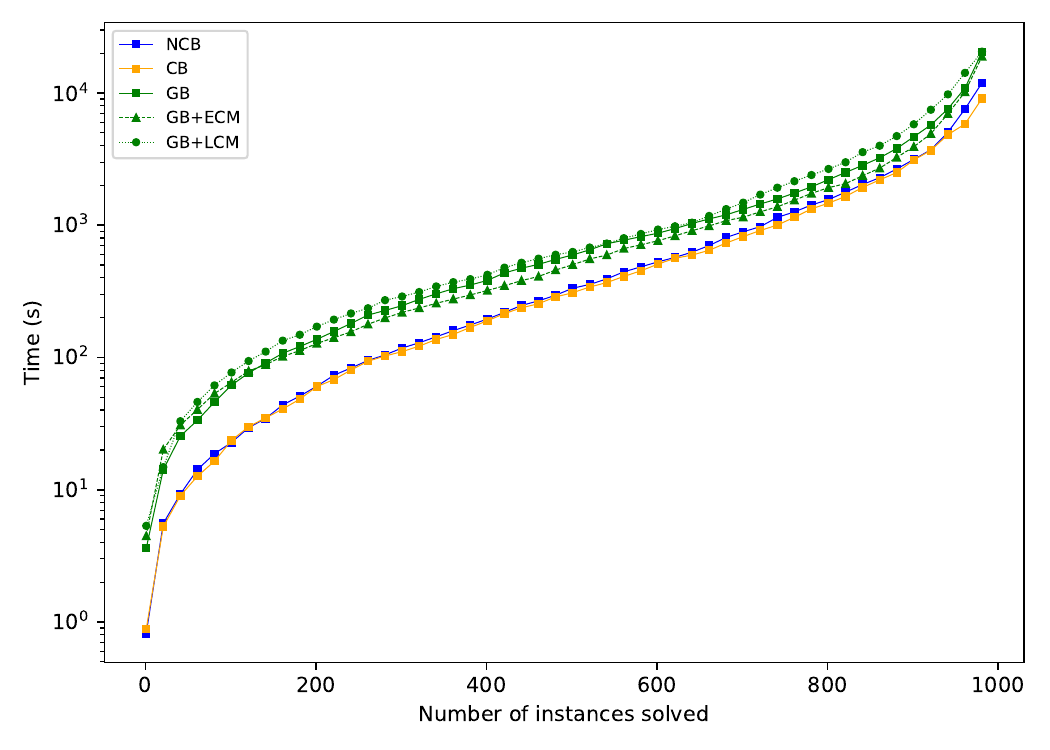}
    \caption{Number solved over time.}
    \label{fig:results-cactus-time-400}
  \end{subfigure}
  \hfill
  \begin{subfigure}{0.495\textwidth}
    \includegraphics[width=\textwidth]{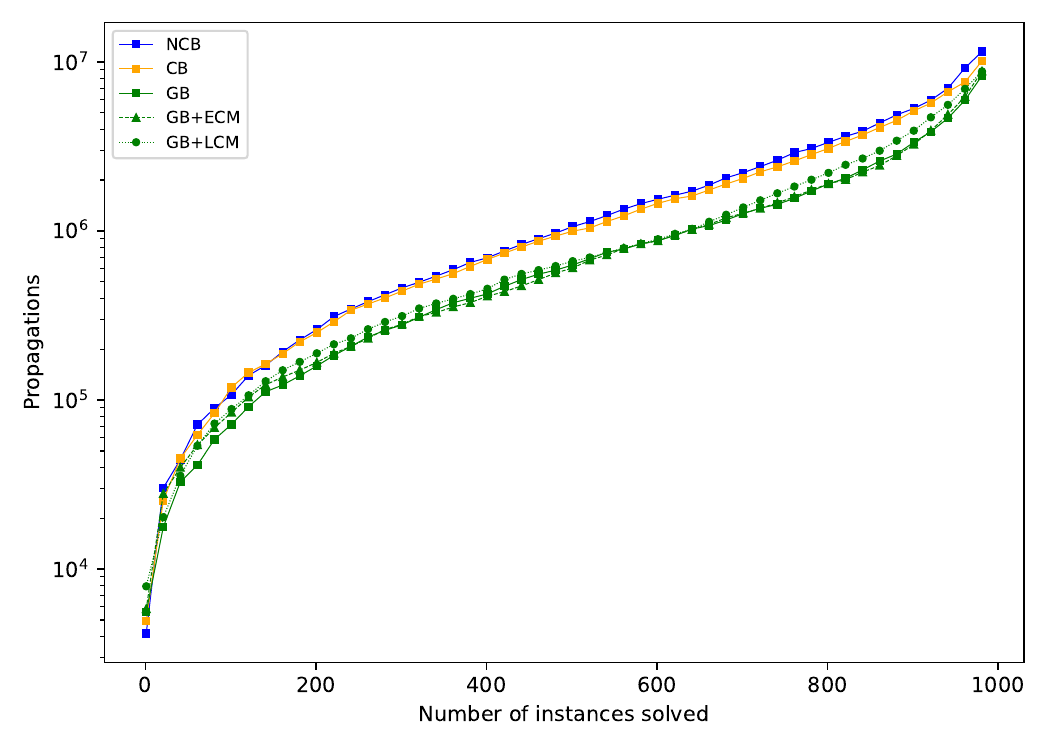}
    \caption{Number solved over number of propagations.}
    \label{fig:results-cactus-propagation-400}
  \end{subfigure}
  \begin{subfigure}{0.495\textwidth}
    \includegraphics[width=\textwidth]{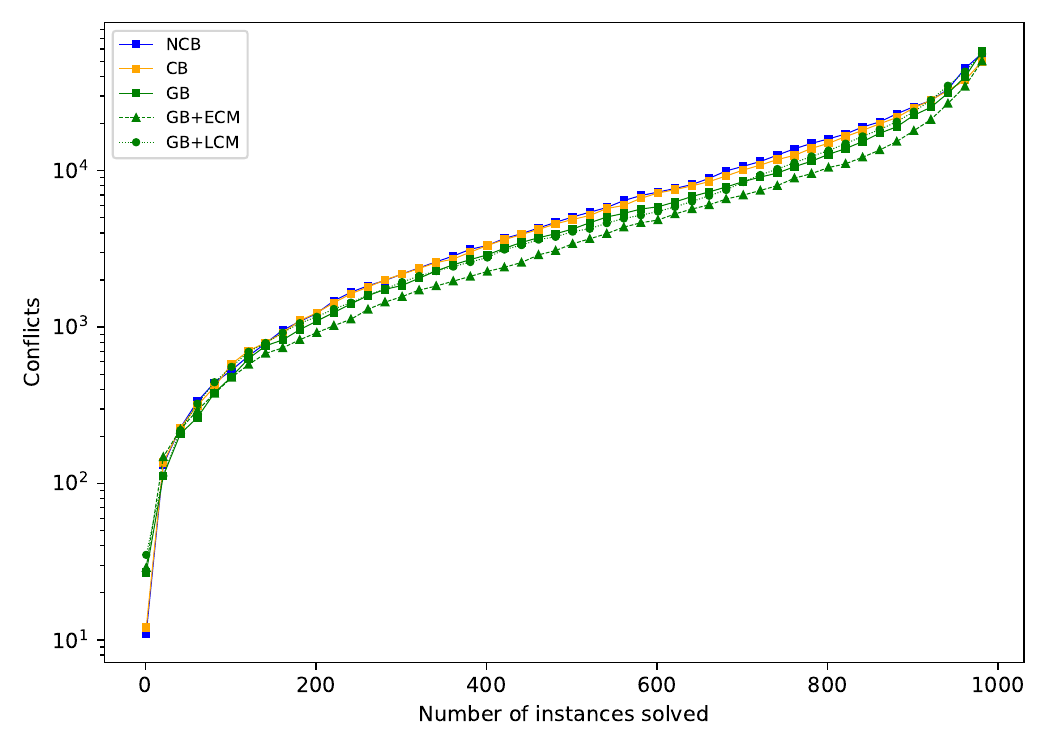}
    \caption{Number solved over number of conflicts.}
    \label{fig:results-cactus-conflicts-400}
  \end{subfigure}
  \hfill
  \begin{subfigure}{0.495\textwidth}
    \includegraphics[width=\textwidth]{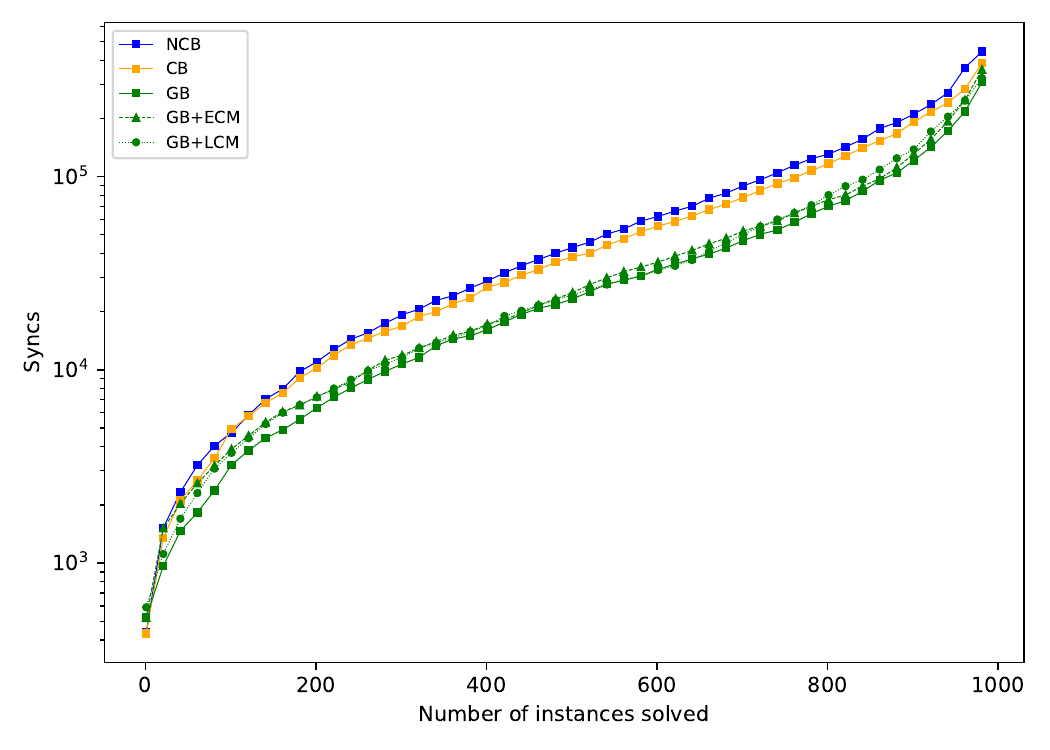}
    \caption{Number solved over number of syncs.}
    \label{fig:results-cactus-sync-400}
  \end{subfigure}
  \caption{Cactus plot of the number of problems solved for NCB, CB, and GB for the experiments of \Cref{sec:experiments}. The lower the better.}
  \label{fig:results-cactus-400}
\end{figure}

\paragraph{Conflict Search Strategies.}
The experiments show that ECR increases the computation cost of the solver by orders of magnitude, while only reducing the number of synchronizations by a few percents. ECR does benefit GB slightly more than CB and NCB, but overall, the cost of ECR seems quite prohibitive for the gain it provides.

PCR does not show any significant improvement in terms of number of synchronizations. The implementation effort of PCR being very similar to ECR, we believe that PCR is not worth the cost either.

Different conflict search strategies could be useful in applications with very high cost of synchronizations, where the overhead of ECR and PCR could be amortized. However, there probably are more efficient ways to improve performance of such applications.

We conclude that PCR/ECR are not worth the cost.

\paragraph{Chunk Merging.}
Interestingly, on pure SAT problems, ECM seems to be the most effective chunk merging strategy for computation time, while no merging seems slighlty better for the number of synchronizations. As indicated in \Cref{sec:chunk-merging}, ECM makes backtracking more aggressive, which can lead to additional unassigned literals and therefore more synchronizations. On the other hand, LCM is more conservative but seems to have a detrimental effect on both the solving time and the number of synchronization. The solving time penalty probably comes from the fact that LCM makes chunk cost calculation more expensive. The reason for the increase in synchronizations is less clear. It could be due to the quality of learned clauses using the missed implications.

\paragraph{Restarts}
Unsurprisingly, restarts do not synergize well with GB. They increase both the solving time and the number of synchronizations. This is because restarts cancel a lot of the expensive bookkeeping of GB, without exploiting the benefits of GB. In particular, there is no point in trying to preserve the trail as much as possible, if we are going to throw it away with a restart. This is also illustrated by the fact that the relative performance of GB compared to NCB is much worse with restarts than without.

\paragraph{Learning vs. cheapest chunk.}
The option BB can sometimes have a negative effect on the performance of the solver. This might be due to the fact that it introduces clauses that are not really relevant to the conflict, which can lead to worse behavior after conflict repair.
Interestingly, BB does seem to behave better with the larger problems ran in \Cref{sec:experiments}, which suggests that it might be more effective when the cost of synchronizations is higher.

\paragraph{Cactus plot analysis.}
The cactus plots in \Cref{fig:results-cactus-650} show that easy problems are solved faster with (N)CB, while GB seems to scale better on harder problems. This is probably because GB has a higher overhead than (N)CB, but saves more propagations on harder problem instances.

The cactus plots in \Cref{fig:results-cactus-400} show that GB consistently executes fewer synchronizations than (N)CB.

In both experiments, the number of conflicts remained relatively similar between the different configurations, with a slight decrease for GB. This might be particular to the category of problems we are looking at, where a more local search is effective.



\section{Analysis on SMT conflicts}
\label{sec:app:case-study}
In addition to analysis on graph coloring benchmarks, we generated conflicts from SMT solver to evaluate how different backtracking mechanisms are performing in this domain.
The option acronyms were defined in \Cref{sec:experiments}.

\paragraph{Experiment setup.}
We patched the SMT solver \cvc~\cite{cvc5} to export every conflict within its default propositional solver (\minisat~\cite{DBLP:conf/sat/EenS03}).
For each conflict we logged all current clauses as well as the trail leading to the conflict.
We collected over 2.9 million conflicts by executing \cvc on all benchmarks in the QF\_UF and QF\_NIA suite of the SMT-LIB release 2024~\cite{smtlib2024} with a timeout of 15 seconds per instance.

To reconstruct the conflicts in \napsat, we converted each logged conflict to an input in the DIMACS CNF format, added unit clauses for assumptions, and instructed the solver to take the same decisions as within \cvc.

\paragraph{Results.}
Compared to the previous experiments we need to adapt our measurement metric slightly and count the number of backtracked literals instead of synchronizations.
This is because we cannot track subsequent literal synchronizations after a conflict as any measurement beyond the immediate conflict handling requires the SMT solver in the loop.
Furthermore, we utilize a constant cost function in order to optimize for the number of backtracked literals.

We start each measurement at the logged conflict and measure until the next SAT solver decision is to be made.
Note that a conflict analysis propagates literals which may lead to subsequent conflicts.
Those are included in our measurements as this closer reflects the behavior in an SMT solver.
To make average values more meaningful, we have clustered the conflicts by the sum $s$ of total backtracked literals for both methods:
\begin{center}
\begin{tabular}{ll}
tiny & ($s \leq 10$)\\
small & ($10 < s \leq 50$)\\
medium & ($50 < s \leq 100$)\\
big & ($100 < s \leq 500$)\\
huge & ($500 < s$)
\end{tabular}
\end{center}
\Cref{tab:results-smt} compares GB with NCB on the average number of backtracked literals (literals of the reconstructed trail that were unassigned during a conflict analysis), the average number of conflicts (the logged conflict plus subsequent conflicts), as well as the percentage of instances where one approach surpasses the other.

\newcommand{\x}{\(\pm\)\xspace}
\begin{table}[t]
	\centering
	\begin{tabular}{l|P{1.77cm}|P{1.77cm}|P{1.77cm}|P{1.77cm}|P{1.77cm}}
		\toprule
		& tiny & small & medium & big & huge \\
		\midrule
		\textbf{\#Instances} & 67,274 & 411,097 & 339,571 & 1,847,692 & 243,486 \\
		\midrule
		\textbf{Ratios \#BT} &&&&& \\
		\hspace{1em}NCB > GB & 46.81\% & 49.53\% & 52.20\% & 48.95\% & 58.38\% \\
		\hspace{1em}NCB = GB & 44.14\% & 33.94\% & 24.39\% & 20.72\% & 16.83\% \\
		\hspace{1em}NCB < GB &  7.70\% & 16.52\% & 23.41\% & 30.33\% & 24.79\% \\
		\midrule
		\textbf{Avg. \#BT} &&&&& \\
		\hspace{1em}NCB & 4.53 \x 1.42 & 16.16 \x 7.85 & 42.1 \x 12.7 & 139.0 \x 57.8 & 406 \x 400 \\
		\hspace{1em}GB  & 3.35 \x 1.33 & 12.18 \x 6.45 & 32.6 \x 12.1 & 127.6 \x 56.6 & 300 \x 206 \\
		\midrule
		\textbf{Avg. \#Conflicts} &&&&& \\
		\hspace{1em}NCB & 1.01 \x 0.09 & 1.07 \x 0.32 & 1.20 \x 0.51 & 1.56 \x 0.83 & 1.83 \x 1.06\\
		\hspace{1em}GB  & 1.10 \x 0.34 & 1.35 \x 0.82 & 1.68 \x 1.21 & 2.18 \x 1.56 & 2.62 \x 2.96\\
		\bottomrule

	\end{tabular}

	\caption{Statistics on backtracked literals (BT) and conflicts.
		Standard deviation is shown after the $\pm$ symbol.}
	\label{tab:results-smt}
\end{table}

\paragraph{Analysis.}
Performing GB on conflicts on SMT solver generated conflicts performs fewer literal unassignments than NCB in most cases.
Subsequent conflicts are, however, more likely to appear with GB.
Analyzing a single conflict, GB performs fewer or equal backtracking operations than NCB by design. However, as GB and NCB backtrack different literals, they might encounter different secondary conflicts. In particular, GB encounters more of them.
Intuitively, this is quite natural. Since GB preserves more of the trail, it is more likely to trigger additional implications that can lead to further conflicts. An extreme example is when NCB backtracks all the way to the root level. It could be that the trail now only contains a single literal, making further conflicts without decisions very unlikely.
In this case, GB learns more clauses.

\end{document}